\definecolor{blu3}{rgb}{.1,.0,.4}
\newtheorem{theorem}{Theorem}
\newtheorem{corollary}[theorem]{Corollary}
\newtheorem{lemma}[theorem]{Lemma}
\newtheorem{claim}[theorem]{Claim}
\newcounter{geometrical}
\renewcommand{\thegeometrical}{\Roman{geometrical}}
\newcounter{steps}
\newcommand{\RR}{\ensuremath{\mathbb R}}  
\newcommand{\ZZ}{\ensuremath{\mathbb Z}}  
\newcommand{\NN}{\ensuremath{\mathbb N}}  
\newcommand{\GG}{\ensuremath{\mathbb G}}  
\newcommand{\dconnected}{\textsc{DConnectivity}\xspace}
\newcommand{\connectivity}{\textsc{Connectivity}\xspace}
\newcommand{\PP}{\mathcal{P}}
\newcommand{\QQ}{\mathcal{Q}}
\newcommand{\SSS}{\mathcal{S}}
\newcommand{\UUU}{\mathcal{U}}
\newcommand{\CC}{\mathcal{C}}
\DeclareMathOperator{\MBST}{MBST}
\DeclareMathOperator{\OO}{\mathcal{O}}
\begin{document}

\title{Connectivity with Uncertainty Regions Given as Line Segments}

\author{Sergio Cabello\thanks{Faculty of Mathematics and Physics, University of Ljubljana, Slovenia, 
	and Institute of Mathematics, Physics and Mechanics, Slovenia.
	Email address: \texttt{sergio.cabello@fmf.uni-lj.si}.}
\and
	David Gajser\thanks{Faculty of Natural Sciences and Mathematics, University of Maribor, Slovenia, 
	and Institute of Mathematics, Physics and Mechanics, Slovenia. 
	Email address: \texttt{david.gajser@um.si}.}}

\maketitle

\begin{abstract}
	For a set $\QQ$ of points in the plane and a real number $\delta \ge 0$,
	let $\GG_\delta(\QQ)$ be the graph defined on $\QQ$
	by connecting each pair of points at distance at most $\delta$.
	
	We consider the connectivity of $\GG_\delta(\QQ)$ in the best scenario when the location of a few of the 
	points is uncertain, but we know for each uncertain point a line segment that contains it.
	More precisely, we consider the following optimization problem:
	given a set $\PP$ of $n-k$ points in the plane and a set $\SSS$ of $k$ line segments in the plane,
	find the minimum $\delta\ge 0$ with the property that we 
	can select one point $p_s\in s$ for each segment $s\in \SSS$ and  
	the corresponding graph $\GG_\delta ( \PP\cup \{ p_s\mid s\in \SSS\})$ is connected.
	It is known that the problem is NP-hard.
	We provide an algorithm to exactly compute an optimal solution in $\OO(f(k) n \log n)$ time, for 
	a computable function $f(\cdot)$. This implies that the problem is FPT when parameterized by $k$.
	The best previous algorithm uses $\OO((k!)^k k^{k+1}\cdot  n^{2k})$ time
	and computes the solution up to fixed precision. 
 
    \medskip
    \textbf{Keywords:} computational geometry, uncertainty, geometric optimization, fixed parameter tractability, parametric search  
\end{abstract}

\section{Introduction}

For a set $\QQ$ of points in the plane and a real value $\delta\ge 0$, let $\GG_\delta(\QQ)$ be the graph
with vertex set $\QQ$ and edges connecting each pair of points $p,q$ at Euclidean distance
at most $\delta$. Connectivity of the graph $\GG_\delta(\QQ)$ is one of the basic properties
associated to the point set $\QQ$. For example, if the points represent devices that
can communicate and $\delta$ is the broadcasting range of each device, then the connectivity
of $\GG_\delta(\QQ)$ reflects whether all the devices form a connected network and they can
exchange information, possibly through intermediary devices.

In this work we consider the problem of finding the smallest $\delta$ 
such that $\GG_\delta(\QQ)$ is connected,
when some of the points from $\QQ$ are to be chosen from prescribed regions.
More precisely, we consider the following optimization problem.
\begin{quote}
	\connectivity \\
	Given a set $\UUU=\{U_1,\dots ,U_k\}$ of regions in the plane and 
	a set $\PP=\{p_{k+1},p_{k+2},\dots, p_n\}$ of points in the plane, 
	find
	\begin{align*}
			\delta^* ~=~ \min ~&~ \delta\\
						\mbox{s.t.}~&~ p_i \in U_i,\quad \mbox{ for } i=1,\dots, k\\
						&~ \GG_\delta (\{p_1,\dots,p_n\}) \mbox{ is connected}.
	\end{align*}
\end{quote}
In this work we will provide efficient algorithms for the \connectivity problem
when the regions are line segments and $k$ is small. 
See Figure~\ref{fig:example} for an example.

\begin{figure}[tb]
	\centering
	\includegraphics[width=\textwidth,page=1]{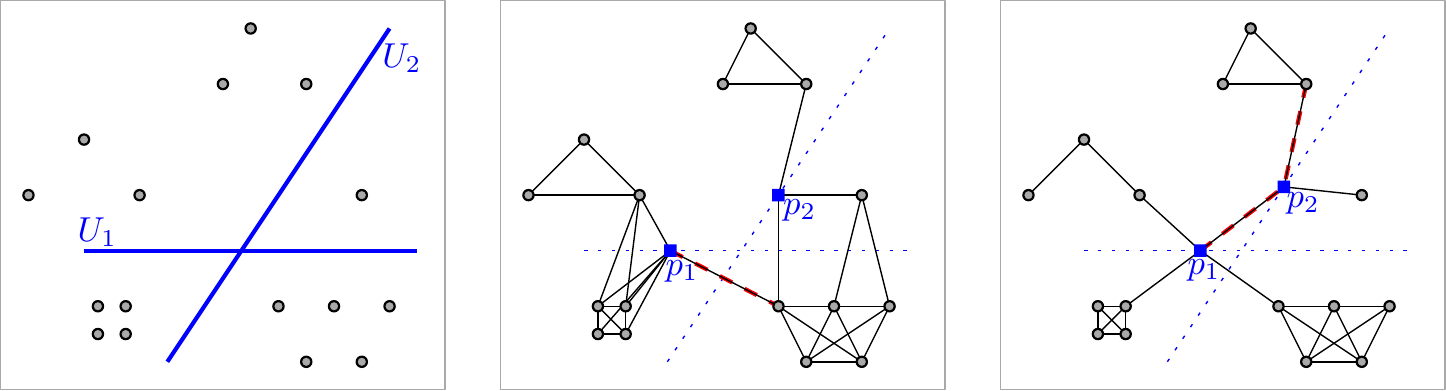}
	\caption{Example of the instances we consider, where the regions are line segments.
		Left: input data with two segments.
		Center and right: two possible choices of points $(p_1,p_2)\in U_1\times U_2$ 
			and the resulting graph $G_\delta$ for the minimum $\delta$ that makes $G_\delta$ connected.
			The edges of length $\delta$ are marked in dashed red; $\delta$ is different in each case.}
	\label{fig:example}
\end{figure}

There are other ways to characterize the connectivity of the graph $\GG_\delta(\QQ)$.
Let $D(p,r)$ denote the closed disk of radius $r$ centered at $p$.
Then, the graph $\GG_\delta(\QQ)$ is connected if and only if $\bigcup_{p\in \QQ} D(p,\delta/2)$ is 
a connected set.
Another characterization is provided by the Euclidean Minimum Bottleneck Spanning Tree of $\QQ$, denoted
by $\MBST(\QQ)$, a spanning tree of $\QQ$ where the length of the longest edge, called bottleneck edge, is minimized; 
a formal definition is given below.
The graph $\GG_\delta(\QQ)$ is connected if and only if $\MBST(Q)$ uses only
edges of length at most $\delta$.

It follows that the problem \connectivity is equivalent to the following problems:
\begin{itemize}
	\item choose a point $p_i$ per region $U_i$, where $i=1,\dots ,k$,
		in such a way that $\bigcup_{i=1,\dots,n} D(p_i,r)$ is connected and $r$ is the smallest possible;
		here the minimum $r$ is $\frac{\delta^*}{2}$.
	\item choose a point $p_i$ per region $U_i$, where $i=1,\dots ,k$, 
		in such a way that the $MBST$ on points $p_1,p_2,\ldots, p_n$ has shortest bottleneck edge.
\end{itemize} 

\paragraph{Related work.}
The problem we consider, \connectivity, was introduced by Chambers et al.~\cite{ChambersEFLSVSS17} under the
name of \emph{Best-Case Connectivity with Uncertainty}. In this setting each region $U_i$
is the uncertainty region for the point $p_i$.
They also considered the worst-case connectivity scenario, where one seeks for
the minimum $\delta$ such that $\GG_\delta (\{p_1,\dots,p_n\})$ is connected
for \emph{all} choices $p_i\in U_i$, where $i=1,\dots,k$. 
Thus, while in the best case we want to select points to achieve connectivity,
in the worst case we want to guarantee connectivity for all possible choices.

Chambers et al.~\cite{ChambersEFLSVSS17} showed that \connectivity is NP-hard even in 
the very restricted case when the uncertainty regions are vertical line segments of unit length
or when the uncertainty regions are axis-parallel unit squares.
For the case when the regions are line segments, they provide an algorithm that in
$\OO((k!)^k k^{k+1}\cdot (n+k)^{2k})$ time	computes an optimal solution up to fixed precision. 
The precision appears because of rounding the intermediary computations.

The case when the uncertainty regions are the whole plane, thus $U_1=\dots=U_k=\RR^2$
has been studied earlier under names like bottleneck Steiner tree problem
or bottleneck $k$-Steiner tree problem.
The results by Sarrafzadeh and Wong~\cite{SarrafzadehW92} imply that the problem is NP-hard.
Ganley and Salowe~\cite{GanleyS96} provided an approximation algorithm; they
also considered the rectilinear metric.
Wang and Li~\cite{WangL02} provided approximation algorithms, while
Wang and Du~\cite{WangD02} provided inapproximability results, assuming P$\neq$NP.
Bae et al.~\cite{blc-esebs-10} showed that the case of $k=1$ can be solved exactly
in $\OO(n\log n)$ time, and the case $k=2$ can be solved in $\OO(n^2)$ time.
Bae et al.~\cite{BaeCLT11} showed that the problem can be solved in 
$f(k)\cdot(n^k + n \log n)$ time, for some function $f(\cdot)$. This last
paper provides algorithms for the $L_1$ and $L_\infty$ metrics with a better running time, 
$f(k)\cdot(n \log^2 n)$.
Very recently, Bandyapadhyay et al.~\cite{BLLSX24} have shown that the problem 
can be solved in $f(k)\cdot n^{\OO(1)}$ time for some function $f(\cdot)$, 
which means that the bottleneck $k$-Steiner tree problem in the plane 
is fixed-parameter tractable with respect to $k$. The techniques can also
be used in other $L_p$ metrics.

Instead of minimizing the longest edge of the spanning tree (bottleneck version),
one could minimize the total length of the tree.
In the $k$-Steiner tree problem, we are given a set $\PP$ of points, and we
want compute a shortest Steiner tree for $\PP$ with at most $k$ Steiner points. 
This is similar to the \connectivity problem because we can take $U_1=\dots=U_k=\RR^2$,
but the optimization criteria is the sum of the lengths of the edges. 
The $1$-Steiner tree problem was solved in $\OO(n^2)$ time with the algorithm of 
Georgakopoulos and Papadimitriou~\cite{gp-1stp-87}.
Brazil et al.~\cite{BrazilRST15} solved the $k$-Steiner tree problem in $\OO(n^{2k})$ time.
Their technique can be adapted to the problem \connectivity and,
assuming that the uncertainty regions are convex of constant description size,
the problem reduces to $\OO(n^{2k})$ instances of quasiconvex programming \cite{Eppstein05},
each with $\OO(k)$ variables and constraints.

Closer to our setting is the work of Bose et al.~\cite{BoseDD22},
who considered the version of the $k$-Steiner tree problem 
where the points have to lie on given lines, 
and showed how to solve it in $\OO(n^k + n\log n)$ time.
Like the work of Brazil et al.~\cite{BrazilRST15}, their technique can be adapted to the problem 
\connectivity with uncertain segment regions. With this, the \connectivity
problem with uncertain segments reduces to $\OO(n^k)$ instances 
of quasiconvex programming \cite{Eppstein05}, each with $\OO(k)$ variables and constraints.

The problem \connectivity is an instance of the paradigm of computing optimal geometric
structures for points sets with uncertainty or imprecision:
each point is specified by a region in which the point may lie.
In this setting, we can model that the position of some points is certain by
taking the region to be a single point.
Usually one considers the maximum and the minimum value that can be attained for
some geometric structure,
such as the diameter, the area of the convex hull, or the minimum-area rectangle
that contains the points. This trend was started by L{\"{o}}ffler and van Kreveld;
see for example~\cite{LofflerK10a,LofflerK10b,KreveldL08} for some of the earlier works
in this paradigm.

\paragraph{Our results.}
We consider the \connectivity problem when each of the regions are line segments.
To emphasize this, we use $\SSS$ instead of $\UUU$ and $s_i$ instead of $U_i$.

First, we consider the decision version of the problem: 
Given a set $\SSS=\{s_1,\dots ,s_k\}$ of $k$ segments in the plane,
a set $\PP=\{p_{k+1},p_{k+2},\dots, p_n\}$ of $n-k$ points in the plane, 
and a value $\delta\ge 0$, decide whether there exist points
$p_i \in S_i$ for $i=1,\dots, k$
such that $\GG_\delta (\{p_1,\dots,p_n\})$ is connected.
We call this version of the problem \dconnected.

Our first main result is showing that \dconnected for segments can be solved using 
$f(k) n \log n$ operations, for some computable function $f(\cdot)$. 
In fact, after a preprocessing of the instance taking $\OO(k^2 n \log n)$ time,
we can solve \dconnected for any $\delta\ge 0$ using $f(k) n$ operations.
For this we use the following main ideas:
\begin{itemize}
	\item Instead of searching for connectivity, we search for a MBST.
	\item It suffices to restrict our attention to MBST of maximum degree $5$.
	\item The MBST will have at most $\OO(k)$ edges that are not part of the 
		minimum spanning tree of $\PP$.
	\item We can iterate over all the possible combinatorial ways
		how the uncertain points interact with the rest of the instance.
		Such interaction is encoded by a so-called topology tree with $\OO(k)$ nodes, 
		and there are $k^{\OO(k)}$ different options.
	\item For each topology tree $\tau$,
		we can employ a bottom-up dynamic programming across $\tau$
		to describe all the possible placements of points on the segment
		that are compatible with the subtree.
\end{itemize}

Our strongest result is showing that \connectivity for segments can be solved using 
$f(k) n \log n$ operations, for some other computable function $f(\cdot)$.
For this we use parametric search~\cite{Megiddo79,Megiddo83}, a generic
tool to transform an algorithm for the decision version of the problem into 
an algorithm for the optimization problem.
We provide a careful description of the challenges that appear when using
parametric search in our setting. 
While we can use Cole's technique~\cite{Cole87} in one of the steps,
we provide an alternative that is simpler, self-contained and uses
properties of the problem.
Eventually, we manage to solve the optimization problem without increasing
in the time complexity of the algorithm the dependency on $n$; 
the dependency on $k$ increases slightly.

Our result shows that \connectivity for line segments is fixed-parameter tractable when
parameterized by the number $k$ of segments.
The running time of our algorithms are a large improvement over the best previous time bound 
of $\OO(n^{2k})$ by Chambers et al.~\cite{ChambersEFLSVSS17} and the bound
of $\OO(n^k + n\log n)$ that could be obtained adapting the approach of
Bose et al.~\cite{BoseDD22}.

Compared to the work of Chambers et al.~\cite{ChambersEFLSVSS17}, we note that they
did not consider the decision problem, but instead guessed a \emph{critical}
path that defines the optimal value. Then they show that 
there are $\OO(n^{2k})$ critical paths.
Compared to the work of Bose et al.~\cite{BoseDD22} and Brazil et al.~\cite{BrazilRST15}, 
we first note that they are minimizing the sum of the length of the edges.
Optimizing the sum is usually harder than optimizing the bottleneck value.
Also, considering the decision problem is often useful for the bottleneck version because
it reduces the number of combinatorial options to consider, 
but this benefit is rarely present when minimizing the sum. 
To be more precise, in the decision version of our bottleneck problem, 
to extend a partial solution 
we only need to know whether it can be connected to a connected subgraph, but
we do not care to which vertex of the connected subgraph we are connecting. 
In contrast, when minimizing the sum of the lengths one has to carry the information of 
whom do you connect to and how much it costs. This means that we have to carry a
description of the cost function, which has a larger combinatorial description complexity.

When $k=O(1)$, our algorithms take $O(n \log n)$ time, which is asymptotically optimal:
since finding a maximum-gap in a set of $n$ unsorted numbers has a lower bound of 
$\Omega(n\log n)$ in the algebraic decision tree model, 
the problem of finding a MBST also takes $\Omega(n\log n)$, even without
uncertainty regions.

In our problem, we have to be careful about the
computability of the numbers appearing through the computation.
It is easy to note that we get a cascading effect of square roots. 
A straightforward approach is to round the numbers that appear through the computation
to a certain precision and bound the propagation of the errors. 
This is easy and practical, but then we do not get exact results.

The numbers computed through our algorithm have algebraic degree 
over the input numbers that depends on $k$, and thus can be manipulated exactly
if we assume that the input numbers are rational or of bounded algebraic degree.
The actual running time to manipulate these numbers exactly 
depends on $k$ and the assumptions on the input numbers, 
and this is hidden in the function $f(\cdot)$.
Below we provide background on algebraic numbers and computation trees, the
tool we use to manipulate numbers. 
Our running times are stated assuming exact computation.

To summarize, there are two sources that make the dependency on $k$ at least exponential: 
the number of topology trees considered in the algorithm is $k^{\OO(k)}$,
and we are manipulating $\Theta(n)$ numbers of algebraic degree at least $2^{\Omega(k)}$.

\paragraph{Organization.}
The rest of the paper is organized as follows.
In Section~\ref{sec:preliminary} we explain the notation and some of the concepts used in the paper.
We also provide some basic geometric observations.
In Section~\ref{sec:geometry} we provide a careful description of geometric operations
that will be used in the algorithm. We pay attention to the details to carry out the 
algebraic degree of the operations.
In Section~\ref{sec:decision} we provide the algorithm for the decision problem.
In Section~\ref{sec:root_functions} we analyze a function that will appear when analyzing parametric search;
we need to bound the algebraic degree of certain equations that appear in the algorithm.
In Section~\ref{sec:param} we provide the optimization algorithm using the paradigm of
parametric search.
We conclude in Section~\ref{sec:conclusions}.

\section{Notation, numbers and preliminary results}
\label{sec:preliminary}

\subsection{Notation} 
For each positive integer $n$, we use $[n]=\{1,\dots, n\}$.
For each set $A$ and $t\in\NN$, we use $\binom{A}{t}$ to denote the set of all subsets of $A$ with $t$ elements.

All graphs in this paper will be undirected. Hence, each graph will be given as an ordered pair $G=(V,E)$, where $V$ is the set of its vertices and $E\subseteq\binom{V}{2}$ is the set of its edges. We also write $V(G)$ and $E(G)$ for the sets of vertices and edges of $G$, respectively. We use the notation $uv$ for the edge with vertices $u$ and $v$. The graph $G$ is \emph{edge-weighted}, if it is accompanied by a weight function $w:E\rightarrow\RR_{\ge 0}$. The \emph{weight} of an edge-weighted subgraph $H$, denoted $w(H)$, is the sum of weights of all of its edges.

If $\QQ$ is a set of points in the plane, then we denote by $K_\QQ$ the complete graph on vertices $\QQ$. Such a graph is naturally accompanied by a weight function on edges that we call \emph{edge length}. In this setting, the edge length of the edge $uv$, where $u,v\in \QQ$, is the Euclidean distance between $u$ and $v$, denoted by $d(u,v)$. We also write $|uv|=d(u,v)$.

Let $G=(V,E)$ be an edge-weighted graph. If $T$ is a spanning tree of $G$, a \emph{bottleneck edge} of $T$ is an edge in $T$ with largest weight. We call $T$ a \emph{minimum bottleneck spanning tree} or \emph{MBST} of $G$, if its bottleneck edge has the smallest weight among all bottleneck edges of spanning trees of $G$. A \emph{minimum spanning tree} or \emph{MST} of $G$ 
is a spanning tree of $G$ that has minimum weight, over all spanning trees of $G$.

We will not distinguish between points and their position vectors. For two sets $A$ and $B$ in the plane, their \emph{Minkowski sum} is $A\oplus B =\{ a+b\mid a\in A, b\in B\}$. Recall that $D(p,r)$ is the closed disk with center $p$ and radius $r$. Then $A\oplus D(0,r)$ is precisely the set of points at distance at most $r$ from some point of $A$. We call this set the $r$-neighborhood of $A$. Any segment in this paper will be a line segment. A segment between two points $a$ and $b$ will be denoted as $\overline{ab}$. For a point $a$, we will use $||a||$ to denote the distance from $a$ to the origin.

\subsection{Representation of numbers and algebraic operations}
Each number that will be computed through our algorithm will be obtained from previous numbers
by either one of the usual arithmetic operations (addition, subtraction, multiplication or division), 
by computing a square root or by solving a polynomial equation of degree at most $2^{\OO(k)}$.
Each number that is computed inside the algorithm has its \emph{computation tree}.
This is a rooted tree that has the just described operations as internal vertices and input values as leaves.  
Each vertex in the tree represents a number that is computed by the operation described in this vertex 
applied to its descendants. See Figure~\ref{fig:computation_tree}, left, for an example. 

\begin{figure}[tb]
	\centering
	\includegraphics[width=.95\textwidth,page=2]{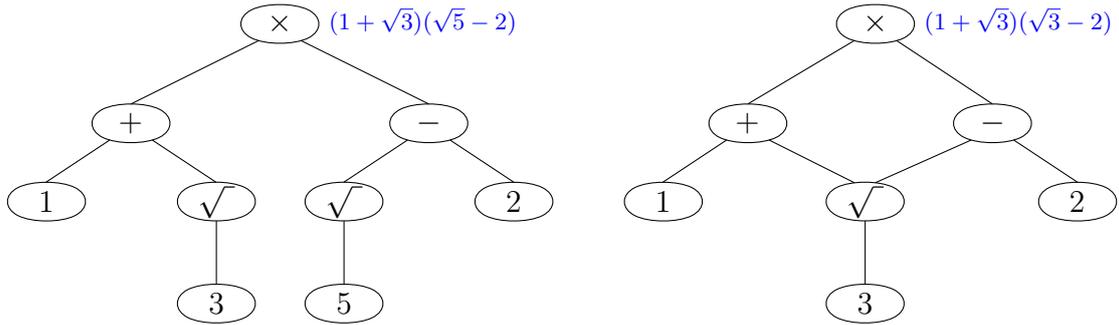}
	\caption{Left: Example of computation tree for $(1+\sqrt{3})(\sqrt{5}-2)$.
		Right: Example of a directed acyclic graph for computing $(1+\sqrt{3})(\sqrt{3}-2)$.}
	\label{fig:computation_tree}
\end{figure}

Computer algebra provides tools to manipulate and compare these numbers exactly; 
see for example~\cite[Section 10.4]{BPR06} or~\cite{Yap99}.
Exact computation is a paradigm promoted within Computational Geometry~\cite{BurnikelFMSS09,KettnerMPSY08,LiPY05,SY17}.

We will show that the depth of computation trees for numbers inside our algorithms will depend only on $k$. It follows that in our algorithms the time complexity of each of the numerical operations 
will always be a function only of $k$, independent of $n$. To avoid a cumbersome description,
\textbf{we will assume in the description that the manipulation of numbers takes $\OO(1)$ time}.
Let us note here that our algorithms will not compute numbers directly by their computation trees, rather by related rooted directed acyclic graphs which are essentially computation trees, but with joint nodes that produce the same number using the same operations. This is a known concept, used for example in~\cite{DAG}. To give an example, if we consider a slightly modified example from Figure~{\ref{fig:computation_tree}}, left, that computes $(1+\sqrt{3})(\sqrt{3}-2)$, we see that we do not need a node for number 5 and that we only need to compute $\sqrt{3}$ once, hence we need 2 fewer nodes. See Figure~{\ref{fig:computation_tree}}, right.
Note that such a transformation does not change the depth of the computation.

The following discussion is about bounding the algebraic degree of the numbers
appearing in the algorithm and it is aimed to readers familiar with algebraic computations.
In our final results we do not talk explicitly about the algebraic degree of the numbers
and it suffices to know that the problem of performing $\ell$ algebraic operations 
on at most $\ell$ numbers is decidable for any $\ell$.

A possible way to represent such a number $\alpha$ is as a univariate
polynomial $P(x)$ with integer coefficients together
with an isolating interval $I$; this is an interval with the property
that $\alpha$ is the unique root of $P(x)$ inside $I$.
The minimum degree of the polynomial representing $\alpha$ is the \emph{algebraic}
degree of $\alpha$ (over the integers).
 
It is known that if $\alpha$ and $\beta\neq 0$ are numbers of algebraic degree $k$, then
$\alpha\pm\beta$, $\alpha\cdot \beta$, $\alpha/\beta$ and $\sqrt{\alpha}$ 
have algebraic degree at most $k^2$.
Moreover, if $\alpha$ is the root of a polynomial of degree $d$ with coefficients
of algebraic degree $k$, then $\alpha$ has algebraic degree $\OO(d k^d)$.
To see this, we first construct a common field extension for all the coefficients, 
which will have degree $\OO(k^d)$, and then use the relation of the degree between towers
of field extensions.

In our decision algorithm, the numbers used in our computations have 
a computation tree of depth $\OO(k)$ on the input numbers, with internal nodes containing only arithmetic operations and square roots.
Therefore, we employ numbers of algebraic degree $2^{\OO(k)}$.
For the optimization problem, at the leaves of the computation tree
of some numbers we will also have a root of a polynomial of degree $2^{\OO(k)}$.
For a computation tree, it is always the same root of a polynomial that is being used.
Therefore, for the computation, it suffices to work
with an extension field of all the input numbers, which has degree $2^{\OO(k)}$ because
there is a single number of algebraic degree $2^{\OO(k)}$.
Therefore, also in the optimization problem, 
the numbers involved in the computation have algebraic degree $2^{\OO(k)}$.

\subsection{Properties of minimum trees}
In this section we present some well known claims that will be used later. 
The first property is a standard consequence of Kruskal's algorithm to compute
the MST.
\begin{claim}
\label{MBST_MST} 
	In any edge-weighted connected graph, all MSTs have the same weight of the bottleneck edge 
	which is the same as the weight of a bottleneck edge in any MBST.
	In particular, each MST is also a MBST.	
\end{claim}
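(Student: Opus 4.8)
The plan is to reduce the three assertions to the single statement that \emph{every MST of a connected edge-weighted graph is also a MBST}, and to prove the latter by a standard exchange argument. For the reduction, let $\beta^*$ denote the smallest bottleneck value over all spanning trees of the graph; since the graph is connected it has at least one spanning tree, and since it has finitely many spanning trees this minimum is attained. By definition every MBST has bottleneck value exactly $\beta^*$, so once we know that every MST is a MBST we get immediately that every MST has bottleneck value $\beta^*$; this yields both remaining parts of the claim at once (all MSTs have the same bottleneck weight, and it coincides with the bottleneck weight of any MBST).

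To prove that an arbitrary MST $T$ of a connected graph $G=(V,E)$ with weights $w$ is a MBST, let $e^*=uv$ be a bottleneck edge of $T$, so that $w(e^*)=\max_{e\in E(T)} w(e)$, and suppose for contradiction that some spanning tree $T'$ of $G$ satisfies $\max_{e\in E(T')} w(e) < w(e^*)$. Removing $e^*$ from $T$ leaves two subtrees whose vertex sets $A\ni u$ and $B\ni v$ partition $V$. Since $T'$ is connected, it contains an edge $f$ with one endpoint in $A$ and the other in $B$, and $w(f)\le \max_{e\in E(T')} w(e) < w(e^*)$, so in particular $f\neq e^*$. The graph obtained from $T$ by deleting $e^*$ and adding $f$ is then again a spanning tree of $G$, because adding an edge across the $(A,B)$ cut reconnects the two subtrees into a single tree on $V$, and its weight is $w(T)-w(e^*)+w(f) < w(T)$, contradicting the minimality of $T$. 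Hence no such $T'$ exists and $T$ is a MBST.

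I expect no serious obstacle here: this is a textbook-level exchange argument, and the only point deserving a moment of care is the existence of the crossing edge $f$ and the fact that it is distinct from $e^*$, which is exactly where connectivity of $G$ (guaranteeing that $T'$ spans $V$) and the strict inequality on the bottleneck of $T'$ enter. As an alternative one can argue directly from Kruskal's algorithm: the tree $T_K$ it outputs is an MST, and at the moment it inserts its heaviest edge $e^*=uv$ the edges of weight strictly less than $w(e^*)$ do not connect the component of $u$ to that of $v$, so every path from $u$ to $v$ in any spanning tree must use an edge of weight at least $w(e^*)$; hence $T_K$ is bottleneck-optimal. Upgrading this to ``every MST is a MBST'' requires the extra remark that any MST arises as a Kruskal output under a suitable tie-breaking of equal-weight edges, which is slightly more delicate than the self-contained exchange argument above, so I would present the exchange argument.
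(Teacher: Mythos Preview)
Your proof is correct. The paper does not actually give a proof of this claim; it simply states that it is ``a standard consequence of Kruskal's algorithm to compute the MST'' and moves on. Your self-contained exchange argument is a perfectly valid (and more explicit) justification, and the Kruskal-based alternative you sketch at the end is essentially what the paper alludes to in its one-line remark.
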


The following result is also well known; see for example~\cite{MST5}.

\begin{claim}
\label{degree} 
	For each non-empty set $\QQ$ of points in the plane, 
	there exists a MST and a MBST of the complete graph $K_\QQ$ with maximum degree at most 5.
\end{claim}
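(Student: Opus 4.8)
The plan is to prove that every finite planar point set admits an MST (and an MBST) of maximum degree at most $5$ by a standard local-exchange argument, relying on a geometric packing fact about angles.

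First I would recall the key geometric lemma: if three points $q$, $a$, $b$ are such that the angle $\angle a q b$ is strictly less than $60^\circ$, then the two segments $\overline{qa}$ and $\overline{qb}$ cannot both be the longest side of the triangle $qab$; more precisely, the side $\overline{ab}$ opposite to the $60^\circ$ angle is strictly shorter than $\max\{|qa|,|qb|\}$. This follows from the law of cosines: $|ab|^2 = |qa|^2 + |qb|^2 - 2|qa||qb|\cos\angle a q b$, and when $\cos\angle a q b > 1/2$ one checks that $|ab|^2 < \max\{|qa|^2,|qb|^2\}$. The contrapositive packing statement is: around any point $q$, one cannot have six other points $a_1,\dots,a_6$ with all pairwise angles $\angle a_i q a_j \ge 60^\circ$ unless some of them are at distance exactly $60^\circ$ apart; so if $q$ had degree $6$ in an MST, two of its neighbors $a_i,a_j$ would subtend an angle strictly less than $60^\circ$ (since six angles summing to $360^\circ$ forces a strict inequality when the points are in general position, and ties are handled by a tie-breaking/perturbation argument).

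Next I would run the exchange argument for the MST. Suppose for contradiction that every MST of $K_\QQ$ has a vertex of degree at least $6$; take an MST $T$ and a vertex $q$ of degree $\ge 6$, with neighbors including $a$ and $b$ such that $\angle a q b < 60^\circ$. Then $|ab| < \max\{|qa|,|qb|\}$; say $|ab| < |qa|$. Removing edge $qa$ from $T$ splits it into two components; since $b$ lies in one of them, adding edge $ab$ reconnects $T$ into a spanning tree $T'$ with $w(T') < w(T)$, contradicting minimality. For the degenerate case where the minimum angle is exactly $60^\circ$ (so $|ab| = |qa| = |qb|$ is possible), the swap does not decrease weight but does not increase it either, so $T'$ is still an MST; one then argues that such swaps can be used to reduce the number of high-degree vertices (e.g. via a secondary potential such as $\sum_v \deg_T(v)^2$), so among all MSTs there is one with max degree $\le 5$. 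The same argument applies verbatim to MBSTs: the edge $ab$ we add has length $\le \max\{|qa|,|qb|\}$, hence at most the bottleneck value, so the bottleneck of $T'$ is no larger than that of $T$; again a secondary count drives the degree down, and by Claim~\ref{MBST_MST} one may in fact just take the degree-$\le 5$ MST, which is automatically an MBST.

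The main obstacle is the handling of the degenerate $60^\circ$ case and the need to guarantee that the exchange process terminates with a genuinely degree-$\le 5$ tree rather than cycling among equal-weight trees. The clean way around this is to fix, once and for all, a lexicographic tie-breaking rule on edges (say, order edges by length, breaking ties by the index pair) so that "the" MST/MBST under this refined order is unique, every strict improvement step still applies, and in the boundary case the lexicographic rule forces strict progress in the refined order; since there are finitely many spanning trees the process halts, and the terminating tree has max degree $\le 5$. Since this is a well-known result (as the paper notes, see~\cite{MST5}), I would keep the write-up to a short paragraph invoking this exchange argument rather than belaboring the degenerate analysis.
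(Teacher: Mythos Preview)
Your outline is correct and is precisely the standard exchange argument for this well-known fact. Note, however, that the paper does not actually prove Claim~\ref{degree}: it simply states the result as well known and points to a reference. So there is no ``paper's own proof'' to compare against; you are supplying the proof the paper omits.

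Two minor remarks on your write-up. First, the reconnection step is cleaner than you suggest: after deleting $qa$, the edge $qb$ is still in the tree, so $b$ necessarily lies in the component containing $q$, while $a$ lies in the other; hence $ab$ always reconnects. Second, the degenerate $60^\circ$ case only arises when additionally $|qa|=|qb|$ (if the lengths differ, the law of cosines still gives a strict inequality even at $60^\circ$), which further constrains the bad configurations to a hexagonal pattern around $q$; in that situation a single swap already lowers the degree of $q$ without raising any other degree above $5$, so the termination argument is quite direct. Your final sentence---that by Claim~\ref{MBST_MST} any degree-$\le 5$ MST is automatically an MBST---is the cleanest way to get the MBST statement and matches how the paper uses these claims together.
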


\section{Geometric computations}
\label{sec:geometry}

In this section we describe representations of geometric objects that we will use 
and we present some basic geometric computations that will be needed in the algorithms. 

Each line segment $s$ is determined by a quadruple $(p_s, e_s, a_s, b_s)$, 
where $p_s$ is an arbitrary point on the line supporting $s$, 
$e_s$ is a unit direction vector of $s$, and real numbers $a_s\le b_s$ determine the endpoints of $s$, 
which are $p_s+a_s e_s$ and $p_s+b_s e_s$. 
When $a_s=b_s$, the segment degenerates to a single point; we keep calling it a segment to avoid
case distinction. 
We will write $s=(p_s, e_s, a_s , b_s)$. 
We could easily allow for non-unit vectors $e_s$, but then some of the equations below
become a bit more cumbersome.
We will use such representations of line segments because we will often consider their subsegments, 
hence $p_s$ and $e_s$ will remain constant and only $a_s$ and $b_s$ will change. 

A \emph{segmentation} on a line $L$ denotes a union of pairwise-disjoint line segments on $L$. 
Some of the segments may be a single point.
If the line is given by a position vector of some point $p$ on the line and a unit direction vector $e$, 
then we will represent a segmentation of this line with $N$ disjoint subsegments and points as a $(2N+2)$-tuple 
\[
	X = (p,e,a_1,b_1,a_2,b_2,\ldots,a_N,b_N),
\]
where $a_i\leq b_i$, for each $i\in[N]$, and $b_i< a_{i+1}$, for each $i\in[N-1]$. 
For each $i\in[N]$, the line segment $(p,e,a_i,b_i)$ is part of the segmentation $X$. 
We call $N$ the size of the segmentation $X$.
See Figure~\ref{fig:segmentation} for an example.

\begin{figure}[tb]
	\centering
	\includegraphics[scale=1.2,page=3]{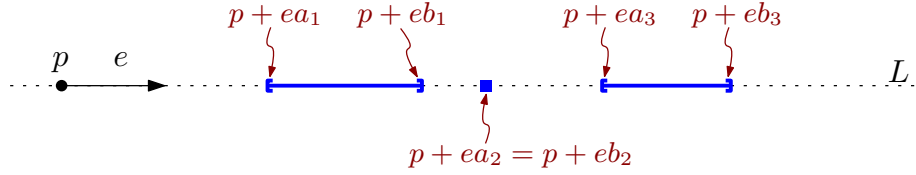}
	\caption{A segmentation of size 3 where the second segment degenerates to a point.}
	\label{fig:segmentation}
\end{figure}

\refstepcounter{geometrical}
\paragraph{\thegeometrical) Intersection of two lines.}\label{geom:lineline} 
Suppose we are given two lines $L_1\equiv p_1+ t_1 e_1$ and $L_2\equiv p_2+ t_2e_2$, 
where $p_1$ and $p_2$ are points on the first and second line, respectively, 
$e_1$ and $e_2$ are their unit direction vectors, respectively, and $t_1,t_2\in \RR$ are parameters. 
We would like to compute $L_1\cap L_2$.

If the lines are parallel, which is equivalent to $e_1=\pm e_2$, then we have two options. 
If the system of two linear equations in one unknown $p_2=p_1+ te_1$ has some solution, 
then the lines are equal and $L_1\cap L_2=L_1$. Otherwise, $L_1\cap L_2=\emptyset$.

If the lines are not parallel, then the system of two linear equations with two unknowns 
$p_1+ t_1e_1=p_2+ t_2e_2$ has a unique solution $(t_1^*,t_2^*)$. 
In this case $L_1\cap L_2$ contains exactly one point, namely $p_1+ t_1^*e_1=p_2+ t_2^*e_2$.

In all cases, we can compute $L_1\cap L_2$ with $\OO(1)$ arithmetic operations.

\refstepcounter{geometrical}
\paragraph{\thegeometrical) Intersection of a circle with a line.}\label{geom:circleline} 
Suppose we are given a circle (curve) $C$ with center at $c$ and radius $\delta>0$, and a line $L\equiv p+ te$, 
where $p$ is a point on the line, $e$ is its unit direction vector and $t\in \RR$ is a parameter.
We would like to compute the intersection $C\cap L$. 

This means that we need to solve the equation $||p+te-c||=\delta,$ which is quadratic in the unknown $t\in \RR$. 
This is equivalent to solving
\[ 
	||p+te-c||^2 =\delta^2, 
\]
which can be rewritten as 
\[ 
	t^2+2te\cdot(p-c)+||p-c||^2-\delta^2=0.
\]
 
Let $\Delta=4\left(e\cdot(p-c)\right)^2-4||p-c||^2+4\delta^2$ be the discriminant of this equation in $t$. 
We consider the following 3 cases.
\begin{enumerate}[i)]
    \item \label{D<0} $\Delta<0$. In this case, $C\cap L=\emptyset$.
    \item \label{D=0} $\Delta=0$. In this case, there is one solution of our quadratic equation, which is $t_0=e\cdot(c-p)$. Hence, $C\cap L=\{p+ t_0e\}$. The line $L$ is tangent to $C$.
    \item \label{D>0}$\Delta>0$. In this case, there are two solutions of our quadratic equation, which are 
    \[
        t_1=e\cdot(c-p)-\frac{1}{2}\sqrt{\Delta},~~~~
        t_2=e\cdot(c-p)+\frac{1}{2}\sqrt{\Delta},
    \]
    hence $C\cap L=\{p+ t_1e,p+ t_2e\}$.
\end{enumerate}

\refstepcounter{geometrical}
 \paragraph{\thegeometrical) Intersection of a disk with a line segment.}\label{geom:disksegment} 
Suppose we are given a disk $D$ with center at $c$ and radius $\delta> 0$, and a line segment $s=(p_s, e_s, a_s, b_s)$.
We would like to compute the intersection $D\cap s$. 

First we compute the intersection between the boundary $C$ of $D$ and the line $L$ 
that contains $s$, as described in~\ref{geom:circleline}).
 
\begin{enumerate}[i)]
    \item \label{noIntersection} If $C\cap L=\emptyset$, then $D\cap s=\emptyset$.
    \item \label{touching} If $L$ is tangent to $C$ at the point $p_s+ t_0e_s$, for some $t_0$, 
		then we verify whether $t_0$ is between $a_s$ and $b_s$. 
		In this case $D\cap s$ contains exactly the point $p_s+ t_0e_s$, otherwise $D\cap s=\emptyset$.
    \item \label{intersecting} If $L$ intersects $C$ in two points $p_s+ t_1 e_s$ and $p_s+ t_2e_s$, 
		where $t_1<t_2$, this means that $D\cap L$ is the line segment $s'=(p_s,e_s,t_1,t_2)$. 
		Hence, $D\cap s=s'\cap s$ and can be computed with $\OO(1)$ additional comparisons between
		the values $t_1,t_2,a_s,b_s$.
\end{enumerate}
Note that to compute $D\cap s$ all arithmetic computations were performed when computing $C\cap L$. 
The rest of the operations are only $\OO(1)$ comparisons.

\refstepcounter{geometrical}
\paragraph{\thegeometrical) Intersection of two segmentations of a line.} \label{geom:segmentations} 
Suppose we have segmentations $X_1$ and $X_2$ on some line of sizes $N_1$ and $N_2$, respectively.
We would like to compute the segmentation $X_1\cap X_2$. 
The segmentation $X_1\cap X_2$ has size at most $N_1+N_2$ because the 
points of the segmentation $X_1\cap X_2$ are points of $X_1$ or $X_2$. 
Moreover, because each segmentation already has its segments and points ordered, 
we can compute $X_1\cap X_2$ with $\OO(N_1+N_2)$ comparisons.
The idea is similar to the merging of two sorted lists. For example, 
we can use 3 pointers, one for $X_1$, one for $X_2$ and one for the merged list, 
that are traversed simultaneously once and, for each two consecutive points in the merged list, 
we verify whether the corresponding line segment is a subset of $X_1\cap X_2$ or not in $\OO(1)$ time.

\refstepcounter{geometrical}
\paragraph{\thegeometrical) Computing a Voronoi diagram on a line segment for some set of points.} \label{geom:voronoi}
Suppose we are given a set $\QQ=\{q_1,q_2,\ldots, q_N\}$ of $N$ points in the plane 
and a line segment $s=(p_s, e_s, a_s, b_s)$.
We would like to compute the \emph{Voronoi diagram on the line segment $s$ for points in $\QQ$}.
See Figure~\ref{fig:Voronoi}.
For our purpose we can say that such a Voronoi diagram
is a sequence of pairs $(q'_1, J_1),(q'_2, J_2),\ldots, (q'_{N'}, J_{N'})$ with the following properties: 
\begin{itemize}
    \item for each $i\in [N']$, the point $q'_i$ belongs to $\QQ$ and 
		$J_i$ is a segment contained in $s$,
    \item for each $i\in [N']$ and for each point $p$ in $J_i$, 
		the smallest distance from $p$ to any point in $\QQ$ is $d(p,q'_i)$,		
	\item the union of the segments $J_1,\ldots, J_{N'}$ is the segment $s$,
    \item for each $i\in [N'-1]$, $J_i$ and $J_{i+1}$ have exactly a boundary point in common, 
    \item for each distinct and non-consecutive $i,j\in [N']$, the segments $J_i$ and $J_j$ are disjoint, and
    \item for each $q\in \QQ$ there is at most one $i\in [N']$ with $q_i'=q$.
\end{itemize}

\begin{figure}[tb]
	\centering
	\includegraphics[width=\textwidth,page=4]{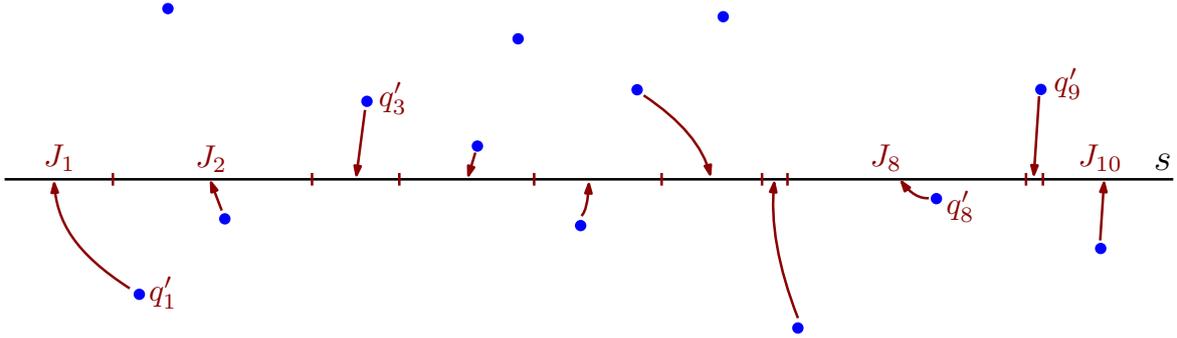}
	\caption{Example of Voronoi diagram on a segment $s$ defined by points.
		For each Voronoi cell we mark its closest site.}
	\label{fig:Voronoi}
\end{figure}

It is well known that this can be done in $\OO(N\log N)$ time by computing 
the Voronoi diagram of $\QQ$ in $\RR^2$ and intersecting it with the segment $s$.
See for example the textbook by de Berg et al.~\cite[Chapters 7 and 2]{BergCKO08}.

Because we are only interested in the segment $s$, this can be done also easily using
a simple divide-and-conquer approach, as follows.
When $\QQ$ contains a single point, its Voronoi diagram is the whole segment $s$.
When $\QQ$ has at least two points, we split $\QQ$ arbitrarily into two sets, $\QQ_1$ and $\QQ_2$, 
of roughly the same size, and recursively compute the Voronoi diagram on $s$ for $\QQ_1$ and for $\QQ_2$.
We get those two Voronoi diagrams as sequences
$(q'_1, J'_1),(q'_2, J'_2),\ldots, (q'_{N'}, J'_{N'})$ for $\QQ_1$
and 
$(q''_1, J''_1),(q''_2, J''_2),\ldots, (q''_{N''}, J''_{N''})$ for $\QQ_2$.
To merge them, we first compute all the non-empty intervals $J'_i\cap J''_j$ for $i\in [N']$ and $j\in [N'']$.
This takes $O(N'+N'')$ time because the two sequences are sorted along $s$.
We also obtain the output sorted along $s$.
We know that, for each such non-empty $J'_i\cap J''_j$, the closest point of $\QQ$ is either $q'_i$
or $q''_j$.
For each such non-empty $J'_i\cap J''_j$, we compute the intersection $p_{i,j}$ of the bisector for 
$q'_i$ and $q''_j$ with the line supporting $s$. If the intersection point $p_{i,j}$ lies on $J'_i\cap J''_j$,
we split $J'_i\cap J''_j$ into two intervals at $p_{i,j}$. 
We have obtained a sequence of intervals along $s$ with the property that each point in an interval
has the same closest point of $\QQ$. With a final walk along the intervals,
we merge adjacent intervals with the same closest point in $\QQ$ into a single interval.
This merging step takes $O(N'+N'')$ time.

If $T(N)$ denotes the running time of the divide-and-conquer algorithm for $N$ points, 
we have the recurrence $T(N)= \OO(N)+ T(|\QQ_1|) +T(|\QQ_2|)$, with base case $T(1)=\OO(1)$.
Because $|\QQ_1|$ and $|\QQ_2|$ are approximately $N/2$, 
this solves to $T(N)=\OO(N \log N)$.

Each number computed in the procedure is obtained from the input data
by $\OO(1)$ additions, subtractions, multiplications and divisions. 
This is because each value is obtained by computing the
intersection of a bisector of two points of $\QQ$ with the line supporting $s$.

\refstepcounter{geometrical}
\paragraph{\thegeometrical) Intersection of a union of disks with a line segment, equipped with a Voronoi diagram.} \label{geom:unionsegment} 
Suppose we are given $N$ disks $D(q_1,\delta),D(q_2,\delta),\ldots, D(q_N,\delta)$ with radius $\delta>0$, a line segment $s=(p_s, e_s, a_s, b_s)$ and a Voronoi diagram $(q'_1, J_1),(q'_2, J_2),\ldots, (q'_{N'}, J_{N'})$ on $s$ for the points $q_1,q_2,\ldots, q_{N}$.
We would like to compute the intersection 
\[
	X=\big(\bigcup_{j\in [N]} D(q_j,\delta)\big)\cap s.
\]
Although we do not need the Voronoi diagram to compute $X$, we will use it to compute it in a linear number of steps.
We first observe that (see Figure~\ref{fig:unionsegment})
\[
	X=\bigcup_{j\in [N']} \big(D(q'_j,\delta)\cap J_j\big).
\]
This implies that $X$ can be computed in time $\OO(N)$ by applying $N'$ times the procedure in~\ref{geom:disksegment}) and then joining each two consecutive line segments $D(q'_j,\delta)\cap J_j$ and $D(q'_{j+1},\delta)\cap J_{j+1}$, if they have a common endpoint $J_j\cap J_{j+1}$.
Note that the output is a segmentation contained in $s$.

\begin{figure}[tb]
	\centering
	\includegraphics[width=\textwidth,page=5]{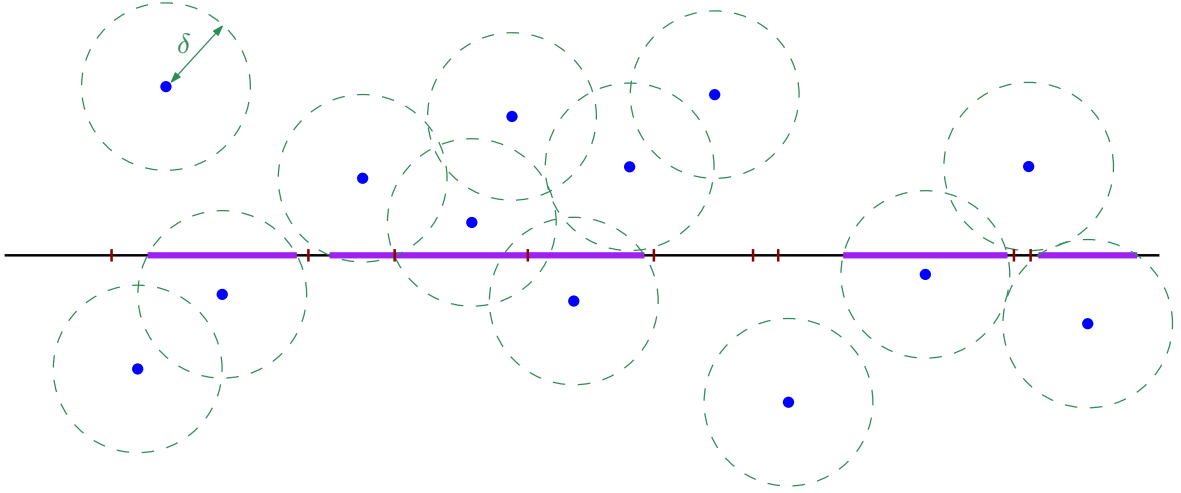}
	\caption{Computing the intersection of a segment and the union of congruent disks using the Voronoi diagram
		of the centers.}
	\label{fig:unionsegment}
\end{figure}

\refstepcounter{geometrical}
\paragraph{\thegeometrical) Intersection of a $\delta$-neighborhood of a segmentation with a line segment.} \label{geom:segmentationsegment} 
Suppose we are given some $\delta>0$, a line segmentation $X=(p_X, e_X, a_1, b_1, \ldots , a_N, b_N)$ 
and a line segment $s=(p_s, e_s, a_s, b_s)$. We would like to compute the intersection 
$X'=\left(X \oplus D(0,\delta)\right)\cap s$ between $s$ and the $\delta$-neighborhood of $X$.
This is a segmentation in the line supporting the segment $s$. 
See the top of Figure~\ref{fig:segmentationsegment}.

\begin{figure}[tb]
	\centering
	\includegraphics[scale=1.1,page=6]{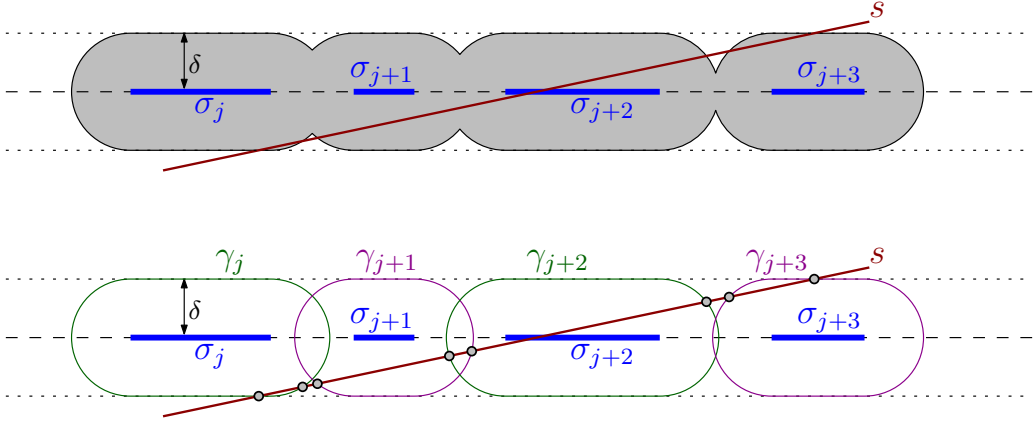}
	\caption{Top: an example showing the input in~\ref{geom:segmentationsegment}).
		Bottom: the curves $\gamma_i$ and their intersection with the segment $s$.}
	\label{fig:segmentationsegment}
\end{figure}

For each $j\in[N]$, let $\sigma_j$ be the $j$-th segment of the segmentation $X$;
thus $\sigma_j=(p_X,e_X,a_j,b_j)$.
For each $j\in[N]$, let $\gamma_j$ be the boundary of $\sigma_j\oplus D(0,\delta)$;
see the bottom of Figure~\ref{fig:segmentationsegment}.
The boundary of $\gamma_j$ consists of two semicircles of radius $\delta$, 
one centered at $p_X+a_j e_X$ and another centered at $p_X+b_j e_X$, 
and two copies of the segment $\sigma_j$
translated perpendicularly to $\sigma_j$ by $\delta$, one in each direction.
 
For each $j\in[N]$, we compute the (possibly empty, possibly degenerate) segment 
$\eta_j = \bigl( \sigma_j\oplus D(0,\delta)\bigr)\cap s$. 
See the top of Figure~\ref{fig:segmentationsegment2}.
To do this we compute $\gamma_j\cap s$ using~\ref{geom:lineline}) and~\ref{geom:circleline}) 
for the lines and circles supporting pieces of $\gamma_j$ and,
for each intersection point we find, we test whether it indeed belongs to $\gamma_j$.
We also test whether the endpoints of $s$ belong to $\sigma_j\oplus D(0,\delta)$,
as the segment $s$ may start inside multiple regions $\sigma_j\oplus D(0,\delta)$.
This takes $\OO(N)$ time and each number we computed requires $\OO(1)$ arithmetic
operations, square roots, and comparisons. (The explicit computation of $\eta_j$ may require
the square root.)
Each such non-empty segment $\eta_j$ is represented as $\eta_j = (p_s,e_s,a'_j,b'_j)$,
using the same point $p_s$ and direction unit vector $e_s$ for all $j\in[N]$.

\begin{figure}[tb]
	\centering
	\includegraphics[scale=1.1,page=8]{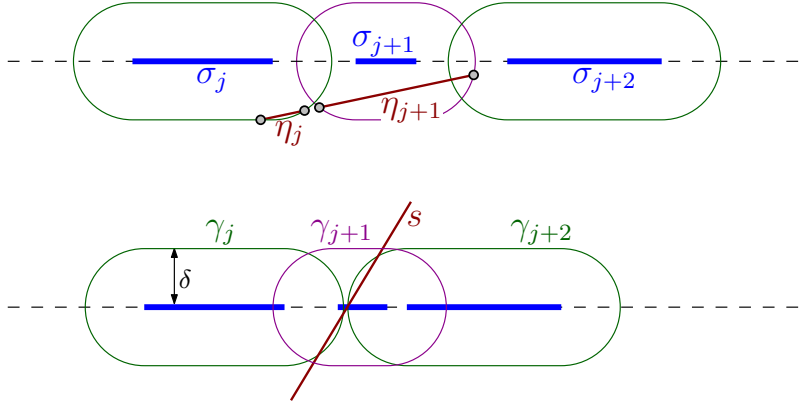}
	\caption{Top: an example showing $\eta_j$ and $\eta_{j+1}$. Note that $\eta_{j+2}$ 
		is not shown, but it would overlap with $\eta_{j+1}$.
		Bottom: example showing that $s$ may not enter $\gamma_j,\gamma_{j+1},\gamma_{j+2}$
		in that order.}
	\label{fig:segmentationsegment2}
\end{figure}

If we found no intersections, meaning that $\eta_j = \emptyset$ for all $j\in [N]$,
we return $X'=\emptyset$. Otherwise, the segments $\eta_1,\ldots,\eta_N$ may overlap and we have
to merge them. 
We must be careful because the line segment $s$ does not need to enter the regions $\gamma_j$, $\gamma_{j+1}$ and $\gamma_{j+2 }$, for some $j\in[N-2]$ in this order. 
See for example the bottom of Figure~\ref{fig:segmentationsegment2}.
However, we can be sure about two things:
\begin{itemize}
	\item If $\eta_i\neq \emptyset$ and $\eta_j\neq \emptyset$ for some $1\leq i<j\leq N$, then $\eta_k\neq \emptyset$ for all $k\in\{i,i+1,\ldots,j\}$.
	\item If $\eta_i\cap\eta_j\neq \emptyset$ for some $1\leq i<j\leq N$, then $\eta_i\cap\eta_k\neq \emptyset$  for all $k\in\{i,i+1,\ldots,j\}$.
\end{itemize}

This means that we can merge the segments $\eta_1,\ldots,\eta_N$ by considering only segments with adjacent indices.
One way to do it is to compute 
\[ 
	m =\min \{ j\in [N]\mid \eta_j\neq \emptyset\} ~~\text{ and }~~ M=\max \{ j\in [N]\mid \eta_j\neq \emptyset\},
\]
and make a linked list with the collinear segments 
\[
	\eta_m = (p_s,e_s,a'_m,b'_m), \ldots, \eta_M = (p_s,e_s,a'_M,b'_M),
\]
in that order. Walking along the list, whenever two consecutive segments $\eta = (p_s,e_s,a,b)$ and $\eta' = (p_s,e_s,a',b')$ in
the list intersect, which can be checked by sorting numbers $a,a',b$ and $b'$, we merge them into the single segment 
$\eta''$, which replaces $\eta$ and $\eta'$ in the list. 
If the final list is $\tilde\eta_1 = (p_s,e_s,\tilde a_1,\tilde b_1),\ldots, \tilde\eta_j = (p_s,e_s,\tilde a_{N'},\tilde b_{N'})$,
we then have
\[
	\left(X\oplus D(0,\delta)\right)\cap L_s = (p_s,e_s,\tilde a_1,\tilde b_1, \ldots, \tilde a_{N'},\tilde b_{N'}).
\]
The whole computation takes $\OO(N)$ time and 
each number in the output is obtained from the input data by performing $\OO(1)$ arithmetic operations 
and square roots.

\section{Solving the decision version}
\label{sec:decision}

In this section we show how to solve the decision of the problem, \dconnected,
when the uncertain regions are segments. 
Let us recall the problem.
Given a set $\SSS=\{s_1,\dots ,s_k\}$ of $k$ segments in the plane,
a set $\PP=\{p_{k+1},p_{k+2},\dots, p_n\}$ of $n-k$ points in the plane, 
and a value $\delta\ge 0$, decide whether there exist points
$p_i \in s_i$ for $i=1,\dots, k$
such that $\GG_\delta (\{p_1,\dots,p_n\})$ is connected.
Denoting by $\delta^*$ the optimal value in the optimization problem, 
we want to decide whether $\delta\ge \delta^*$.

We will consider the case $\delta=0$ separately. Then we will present an algorithm that solves \dconnected for $\delta>0$ in 4 parts, which are in bold in the next sentence. \textbf{DStep~\ref{step:build_tree}} will be executed first, followed by \textbf{DStep~\ref{step:Voronoi}}. Then we will use a \textbf{DLoop} inside of which \textbf{DStep~\ref{step:inside_loop}} will be performed $k^{\OO(k)}$ times. The letter \textbf{D} in \textbf{DStep} and \textbf{DLoop} specifies that the steps are of the algorithm for the decision variant.

\paragraph{Case $\delta=0$.} Solving the problem \dconnected for $\delta=0$ is equivalent to deciding whether all line segments from $\SSS$ have a common point which is the same as each point in $\PP$. This can clearly be done in time $\OO(n)$. For the rest of the description, we assume $\delta>0$.

\refstepcounter{steps}
\paragraph{DStep \thesteps.}\label{step:build_tree} We compute a MST $T$ for points in $\PP$.
It is well known that the tree $T$ can be computed in $\OO(n\log n)$ time~\cite{MSTtime}.
The most usual way is noting that a MST of any Delaunay triangulation of the point set $\PP$
is a MST for $\PP$~\cite[Exercise 9.11]{BergCKO08}. 

We remove all edges from $T$ that are longer than $\delta$. Let the remaining connected components of the tree $T$ be $\CC=\{C_1,C_2,\ldots, C_{\ell}\}$. Note that we removed exactly $\ell-1$ edges. If $\ell>4k+1$, return \textsl{FALSE}.
We first show that this decision based on $\ell>4k+1$ is correct.

\begin{lemma}
\label{components}
	Any two points in $\PP$ from distinct components $C_1,C_2,\ldots, C_{\ell}$ are more than $\delta$ apart.
\end{lemma}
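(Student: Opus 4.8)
The plan is a short proof by contradiction using the exchange property of minimum spanning trees. Suppose there are points $u\in C_i$ and $v\in C_j$ with $i\neq j$ and $|uv|\le\delta$; I will contradict the minimality of $T$. If $uv\in E(T)$, then since $|uv|\le\delta$ this edge was not deleted, so $u$ and $v$ would lie in the same component of the forest obtained from $T$, contradicting $i\neq j$; hence $uv\notin E(T)$.

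So let $P$ be the unique $u$--$v$ path in $T$. Since $u$ and $v$ lie in different components of the forest obtained from $T$ by deleting all edges longer than $\delta$, the path $P$ is not contained in a single component, hence it contains at least one deleted edge $f$, i.e.\ $|f|>\delta$. Removing $f$ from $T$ leaves exactly two components, with $u$ and $v$ on opposite sides (because $f$ lies on $P$), so $T'=(T-f)+uv$ is connected and has the same number of edges as $T$; thus $T'$ is again a spanning tree of $K_\PP$. But then $w(T')=w(T)-|f|+|uv|\le w(T)-|f|+\delta<w(T)$, which contradicts the assumption that $T$ is a minimum spanning tree of $K_\PP$. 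Therefore $|uv|>\delta$, as claimed.

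I do not expect a genuine obstacle here: the only point needing a line of justification is that the swap $T-f+uv$ yields a spanning tree, which is the standard fact that the deleted edge $f$ lies on the fundamental cycle of the non-tree edge $uv$. (Equivalently, one could phrase the argument through Kruskal's algorithm: the edge $uv$, being no longer than $f$, is processed before $f$ and joins the two sides, so $f$ is never added to $T$; but the exchange argument above is self-contained and fits with the mention after Claim~\ref{MBST_MST} that these facts follow from Kruskal's algorithm.)
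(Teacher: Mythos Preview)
Your proof is correct and follows essentially the same argument as the paper: assume two points $u,v$ in distinct components with $|uv|\le\delta$, note that the unique $u$--$v$ path in $T$ must contain an edge $f$ of length greater than $\delta$, and swap $f$ for $uv$ to obtain a lighter spanning tree, contradicting minimality of $T$. The paper's version is slightly terser (it does not separately argue that $uv\notin E(T)$), but the content is the same.
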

\begin{proof}
	If the lemma was false, then there would exist points $p_a,p_b\in \PP$ such that 
	$d(p_a,p_b)\leq \delta$ and $p_a\in C_{a'}$, $p_b\in C_{b'}$, where $a'\neq b'$. 
	Clearly, the edge $p_a p_b$ is not part of the tree $T$. 
	If we add it to the tree, we get a cycle that connects the points $p_a$ and $p_b$ 
	either via the edge $p_a p_b$ or via a path that contains an edge $e$ that was longer than $\delta$, 
	because the components $C_{a'}$ and $C_{b'}$ are distinct. 
	It follows that if we add the edge $p_a p_b$ and remove the edge $e$ from $T$, 
	we get a spanning tree on points $\PP$ with weight less than the weight of $T$, 
	which is a contradiction. Hence, the lemma must be true.
\end{proof}

\begin{lemma}
\label{ell_bound}
	If $\ell> 4k+1$, then $\delta<\delta^*$.
\end{lemma}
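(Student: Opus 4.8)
The plan is to prove the contrapositive: assuming $\delta\ge\delta^*$, I will show $\ell\le 4k+1$. So suppose $\delta\ge\delta^*$. By definition of $\delta^*$ there is a choice of points $p_i\in s_i$ for $i\in[k]$ such that $\GG_\delta(\QQ)$ is connected, where $\QQ=\{p_1,\dots,p_n\}$; for notational simplicity I assume these $n$ points are pairwise distinct, the general case following by a routine contraction of coincident points (which only shrinks $\QQ$ and the set of vertices deleted below, and so can only help). By the characterization of connectivity via minimum bottleneck spanning trees recalled in the introduction, every MBST of $K_\QQ$ has bottleneck at most $\delta$, and by Claim~\ref{degree} I may fix one, call it $T'$, with maximum degree at most $5$. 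Thus $T'$ is a spanning tree of the $n$ points, all of whose edges have length at most $\delta$ and all of whose vertices have degree at most $5$.

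Next I would delete from $T'$ the $k$ uncertain vertices $p_1,\dots,p_k$, obtaining a forest $F$ with vertex set $\PP$, and bound its number of components by a pure edge count. Since $T'$ has $n-1$ edges and the number of edges of $T'$ incident to at least one of $p_1,\dots,p_k$ is at most $\sum_{i\in[k]}\deg_{T'}(p_i)\le 5k$, the forest $F$ has at least $n-1-5k$ edges on $n-k$ vertices, hence at most $(n-k)-(n-1-5k)=4k+1$ connected components.

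Then I relate the components of $F$ to $C_1,\dots,C_\ell$. First, the connected components of the graph $\GG_\delta(\PP)$ are exactly $C_1,\dots,C_\ell$: this graph contains $T$ with its edges longer than $\delta$ removed, so each of its components is a union of some $C_j$'s, while by Lemma~\ref{components} no edge of $\GG_\delta(\PP)$ joins two distinct $C_j$'s, so no such merging occurs. Now every edge of $T'$ with both endpoints in $\PP$ has length at most $\delta$, hence any path inside $F$ is a path in $\GG_\delta(\PP)$; therefore each component of $F$ lies inside a single $C_j$. Conversely each $C_j$ is nonempty, so it meets some component of $F$, and that component of $F$ is then contained in $C_j$; hence the map sending a component of $F$ to the $C_j$ containing it is well defined and surjective onto $\{C_1,\dots,C_\ell\}$. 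Consequently $\ell\le (\text{number of components of }F)\le 4k+1$, which is exactly the contrapositive of the statement.

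I expect the delicate point to be not the edge-counting but the chain of reductions that licenses the use of $T'$: one must combine the MBST characterization of connectivity, Claim~\ref{degree} for the degree-$5$ bound, and a clean treatment of chosen points that coincide with each other or with points of $\PP$ (absorbed into the initial reduction to distinct points). Once $T'$ is available, the only real observation is that, because all of its intra-$\PP$ edges are short, deleting the $k$ uncertain vertices cannot bridge the partition $C_1,\dots,C_\ell$, so each $C_j$ must swallow at least one of the at most $4k+1$ pieces of $F$.
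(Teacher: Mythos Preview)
Your proof is correct and follows the same skeleton as the paper: prove the contrapositive, pick a spanning tree $T'$ of $\{p_1,\dots,p_n\}$ with maximum degree at most~$5$ and all edges of length at most~$\delta$ (via Claim~\ref{degree}), and use Lemma~\ref{components} to forbid short edges between distinct $C_j$'s. The only difference is in the final bookkeeping. The paper introduces the edge sets $E_1$ (edges with exactly one endpoint among $p_1,\dots,p_k$) and $E_2$ (edges with both endpoints there), combines $|E_1|+2|E_2|\le 5k$ with a ``cluster'' argument giving $|E_1|+|E_2|\ge k+\ell-1$, and concludes $\ell\le 4k+1$. You instead delete the $k$ uncertain vertices outright, bound the number of components of the resulting forest by $(n-k)-(n-1-5k)=4k+1$, and observe that the map from these components to the $C_j$'s is surjective. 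Your count is a bit more direct and avoids the auxiliary $E_1,E_2$; the paper's version makes slightly more explicit where the $+1$ comes from. Either way the argument is the same in substance.
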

\begin{proof}
	We show the contrapositive statement. Hence, we assume $\delta\geq\delta^*$.
	Let $p_i \in s_i$, for $i\in[k]$, be such points that any MBST on points $p_1,p_2,\ldots, p_n$ 
	has a bottleneck edge of length $\delta^*$. Such points exist by definition of $\delta^*$. 
	By Claim~\ref{MBST_MST}, any MST on these points also has a bottleneck edge of length $\delta^*$. 
	Let $T'$ be a MST tree on points $p_1,p_2,\ldots, p_n$ with the degree of each vertex at most 5. 
	Such a tree exists by Claim~\ref{degree}. Hence, there are at most $5k$ neighbors of points $p_1,p_2,\ldots, p_k$. 
	By Lemma~\ref{components}, any two points from distinct components $C_1,C_2,\ldots, C_{\ell}$ 
	are not connected directly with edges of $T'$, hence for each component there is an edge in $T'$ 
	from a point in this component to a point in $\{p_1,p_2,\ldots, p_k\}$.
	Let $E_1$ be the set of edges in $T'$ that have exactly one vertex in $\{p_1,p_2,\ldots, p_k\}$ 
	and let $E_2$ be the set of edges in $T'$ that have both vertices from the set $\{p_1,p_2,\ldots, p_k\}$. 
	We have $|E_1|+2|E_2|\leq 5k$. Because there are no edges between components $C_i$ and $C_j$, for $i\neq j$, 
	we have $|E_1|+|E_2|\geq k+\ell-1$. This is because using edges from $E_1\cup E_2$ 
	we have to connect at least $k+\ell$ distinct ``clusters'' of points: each point in $\{p_1,p_2,\ldots, p_k\}$ 
	is one ``cluster'' and each component $C_1,C_2,\ldots, C_{\ell}$ contributes at least one cluster. 
	This implies $k+\ell-1\leq 5k$, which gives $\ell\leq 4k+1$.
\end{proof}

\refstepcounter{steps}
\paragraph{DStep~\thesteps.}\label{step:Voronoi} For each component $C_i\in\CC$ and for each line segment $s_j\in\SSS$, we compute the 
Voronoi diagram on $s_j$ of the points in $C_i$. This can be done with $\OO(kn\log n)$ steps as explained in~\ref{geom:voronoi}). 

\paragraph{DLoop.} We treat each line segment from $\SSS$ and each component from $\CC$ as an abstract node and we iterate over all possible trees on these $k+\ell$ nodes such that
\begin{enumerate}[a)]
    \item each node from $\SSS$ has degree at most 5 and
    \item no two nodes from $\CC$ are adjacent.
\end{enumerate}
We call each such a tree a \emph{topology tree}, because it describes a potential way to connect the components 
in $\CC$ via points from the line segments in $\SSS$. 
See Figure~\ref{fig:topologytree} for an example.
Note that we are reserving the term \emph{node}
for each connected component of $\CC$ and each segment of $\SSS$. In this way we distinguish nodes from a topology
tree from vertices of other graphs.
   
\begin{figure}[tb]
	\centering
	\includegraphics[width=\textwidth,page=7]{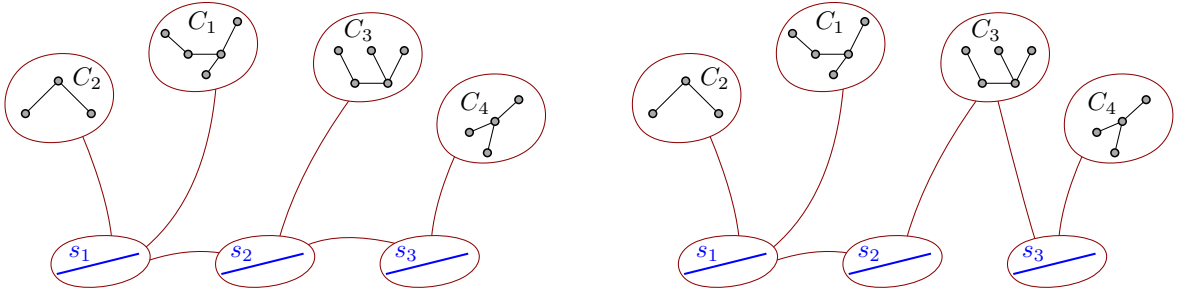}
	\caption{Two examples of topology trees.  The left example has one 
		significant topology subtree (a concept we will introduce later),
		while the right one has two significant topology subtrees, one spanned by the nodes
		$\{s_1,s_2,C_1,C_2,C_3\}$ and one spanned by $\{s_3,C_3,C_4\}$.}
	\label{fig:topologytree}
\end{figure}

We say that a topology tree $\tau$ is \emph{$\delta$-realizable}, if there exist points $p_j\in s_j$, 
for each $j\in [k]$, such that
\begin{enumerate}[a)]
    \item for each edge $s_i s_j$ in $\tau$, where $s_i,s_j\in \SSS$, it holds $d(p_i,p_j)\leq\delta$, and
    \item for each edge $C_i s_j$ in $\tau$, where $C_i\in \CC$ and $s_j\in \SSS$, there exists a point $p\in C_i$ 
		such that $d(p,p_j)\leq\delta$. (Note that we may use different points $p\in C_i$ for different 
		edges $C_i s_j$, $C_i s'_j$ of $\tau$.)
\end{enumerate}

\begin{lemma}
\label{topology}
    There exists a $\delta$-realizable topology tree if and only if  $\delta\geq\delta^*$.
\end{lemma}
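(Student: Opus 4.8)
The plan is to prove both implications by moving between an actual MBST/MST realizing $\delta^*$ and a topology tree, contracting each component $C_i$ to a single node. For the ``if'' direction, suppose a $\delta$-realizable topology tree $\tau$ exists, witnessed by points $p_j\in s_j$ for $j\in[k]$. I would build a spanning tree on $\PP\cup\{p_1,\dots,p_k\}$ of bottleneck at most $\delta$ as follows: inside each component $C_i$ use the edges of $T$ that survived the pruning (all of length $\le\delta$ by construction); for each edge $s_is_j$ of $\tau$ add the edge $p_ip_j$ (length $\le\delta$ by realizability condition~a); for each edge $C_is_j$ of $\tau$ add the edge $p\,p_j$ for the point $p\in C_i$ guaranteed by realizability condition~b (length $\le\delta$). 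Since $\tau$ is a tree on the $k+\ell$ nodes and the $C_i$ are internally connected, the resulting graph is connected and spanning; removing extra edges to kill cycles only shortens it. Hence $\GG_\delta(\{p_1,\dots,p_n\})$ is connected, so $\delta\ge\delta^*$.

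For the ``only if'' direction, assume $\delta\ge\delta^*$. Pick points $p_i\in s_i$ so that some MST $T'$ on $p_1,\dots,p_n$ has bottleneck $\delta^*\le\delta$ and, by Claim~\ref{degree}, maximum degree at most $5$; such a choice exists by the definition of $\delta^*$ together with Claim~\ref{MBST_MST}. Now I contract: form the auxiliary multigraph $H$ whose nodes are $s_1,\dots,s_k$ (one per uncertain point $p_i$) and $C_1,\dots,C_\ell$, and whose edges come from the edges of $T'$ after contracting each $C_i$ and deleting edges internal to a single $C_i$. By Lemma~\ref{components} there are no $T'$-edges directly between distinct components, so every edge of $H$ has at least one endpoint among the $s_j$-nodes; thus condition~b) of a topology tree (no two $\CC$-nodes adjacent) is automatic. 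Since $T'$ is connected and spanning, $H$ is connected and touches every node, so it contains a spanning tree $\tau_0$ on the $k+\ell$ nodes. Each $s_j$-node has degree in $H$ at most the $T'$-degree of $p_j$, which is at most $5$, so $\tau_0$ satisfies condition~a) as well, making $\tau_0$ a topology tree. Finally $\tau_0$ is $\delta$-realizable using exactly the points $p_1,\dots,p_k$: an edge $s_is_j$ of $\tau_0$ corresponds to an edge $p_ip_j$ of $T'$ of length $\le\delta^*\le\delta$ (condition a), and an edge $C_is_j$ corresponds to an edge of $T'$ from some $p\in C_i$ to $p_j$ of length $\le\delta$ (condition b, with possibly different $p$ for different edges at $C_i$, which is allowed).

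The main thing to be careful about — and what I expect to be the only real subtlety — is the passage from $T'$ to the topology tree: after contracting components the contracted image need not itself be a tree (it can have multi-edges or cycles once internal edges are discarded), so one must explicitly extract a spanning tree $\tau_0$ of the connected graph $H$ and then check that thinning edges does not violate the degree bound (it cannot, since removing edges only lowers degrees) nor the ``no two $\CC$-nodes adjacent'' condition (guaranteed structurally by Lemma~\ref{components}). One should also note that connectivity of $\GG_\delta$ is equivalent, via Claim~\ref{MBST_MST}, to the existence of a spanning tree of bottleneck $\le\delta$, which is what both directions actually manipulate; the $\delta=0$ case is handled separately in the text and need not be revisited here since the lemma is invoked only for $\delta>0$ in the algorithm (and in any case the statement remains true verbatim for $\delta=0$).
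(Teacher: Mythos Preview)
Your proposal is correct and follows essentially the same approach as the paper's own proof: for the ``only if'' direction you both contract the degree-$5$ tree $T'$ to a graph on $\SSS\cup\CC$, use Lemma~\ref{components} to rule out $\CC$--$\CC$ edges, then extract a spanning tree; for the ``if'' direction you both use the realizing points $p_j$ together with the intra-component edges of $T$ to certify connectivity of $\GG_\delta$. The only cosmetic difference is that the paper shows connectivity by transforming a $\tau$-path into a $\GG_\delta$-walk, whereas you exhibit a spanning subgraph of bottleneck at most $\delta$ directly; these are equivalent one-line arguments.
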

\begin{proof}
	Let $\tau$ be a $\delta$-realizable topology tree. Let us fix points $p_j\in s_j$, for each $j\in[k]$, 
	such that
	\begin{enumerate}[\indent a)]
		\item for each edge $s_i s_j$ in $\tau$, where $s_i,s_j\in \SSS$, it holds $d(p_i,p_j)\leq\delta$ and
		\item for each edge $C_i s_j$ in $\tau$, where $C_i\in \CC$ and $s_j\in \SSS$, 
			there exists a point $p\in C_i$ such that $d(p,p_j)\leq\delta$. 
	\end{enumerate}
	Let $G=\GG_\delta (\{p_1,p_2,\ldots p_n\})$. 
	We will prove that $G$ is connected, 
	which by definition of $\delta^*$ implies $\delta\geq\delta^*$. 
	Note that any two vertices in the same component $C$ of $\CC$ are connected in $G$.
	Indeed, since there exists a path connecting two vertices of $C$ in the MST $T$ 
	that uses only edges of length at most $\delta$, such a path is also present in $G$.
	
	Consider two arbitrary points $p_i$ and $p_j$, where $i,j\in[n]$. Let $v_i$ be the node of $\tau$
	that contains $p_i$; it may be that $v_i=C$ for some $C\in \CC$ or that $v_i=s$ for 
	some $s\in \SSS$.
	Similarly, let $v_j$ be the node of $\tau$ that contains $p_j$.
	If $v_i=v_j$ and $v_i$ is a segment of $\SSS$, then $p_i=p_j$ and they are connected in $G$.
	If $v_i=v_j$ and $v_i$ is a connected component $C\in \CC$, then they are also connected in $G$.

	It remains to handle the case when $v_i\neq v_j$. Because $\tau$ is a tree,
	then there exist nodes $u_1,u_2,\ldots, u_m$ of $\tau$ 
	such that $u_1u_2\cdots u_m$ is a path in $\tau$ with $u_1=v_i$ and $u_m=v_j$ 
	We will construct a corresponding walk in $G$ that connects 
	$p_i$ and $p_j$ by transforming the path $u_1u_2u_3\cdots u_m$ in the following way.
	\begin{enumerate}
		\item If $u_1=s_i \in \SSS$, then we replace $u_1$ by $p_i$.
			Otherwise $u_1=C$ for some $C\in \CC$. 
			From the definition of $\tau$ it follows that the node $u_2=s_y$ for some segment $s_y\in \SSS$, 
			and there exists a point $p$ from $C$ such that $d(p,p_y)\leq\delta$. 
			We replace the node $u_1$ with a path from $p_i$ to $p$ in $C$. 
		\item If $u_m=s_j \in \SSS$, then we replace $u_m$ by $p_j$.
			Otherwise $u_m=C$ for some $C\in \CC$.
			From the definition of $\tau$ it follows that the node $u_{m-1}=s_y$ for some segment $s_y\in \SSS$, 
			and there exists a point $p$ from $C$ such that $d(p,p_y)\leq\delta$. 
			We replace the node $u_m$ with a path from $p$ to $p_j$ in $C$. 
		\item For each vertex $u_x\in\{u_2,\ldots, u_{m-1}\}$ from $\CC$, 
			we know from the definition of $\tau$ that $u_{x-1}=s_y\in \SSS$ and $u_{x+1}=s_z\in \SSS$, 
			for some $y,z\in[k]$. By definition of $\tau$ there exist points $q_x$ and $q'_x$ from the component
			$u_x\in\CC$ such that $d(q_x,p_y)\leq\delta$ and $d(q'_x,p_z)\leq\delta$. 
			We replace $u_x$ with a path from $q_x$ to $q'_x$ in the component $u_x$. 
		\item For each vertex $s_x\in\{u_2,u_3,\ldots, u_{m-1}\}$ from $\SSS$, 
			we replace $s_x$ with the corresponding point $p_x\in s_x$.
	\end{enumerate}
	It is clear that we transformed the path $u_1u_2u_3\cdots u_m$ in $\tau$ into a walk in the graph $G$ from $p_i$ to $p_j$. 
	Hence, $G$ is connected, which implies $\delta\geq\delta^*$.

	For the other direction, assume $\delta\geq\delta^*$. 
	Let points $p_i\in s_i$, for each $i\in[k]$, be such that any MBST on points $p_1,p_2,\ldots, p_n$ 
	has a bottleneck edge of length $\delta^*$. Let $T'$ be a MBST of the complete graph $K_{\{p_1,\ldots, p_n\}}$
	such that each vertex of $T'$ has degree at most 5. Such a tree $T'$ exists by Claim~\ref{degree}. 
	Because of Lemma~\ref{components} and because $\delta\geq\delta^*$, 
	there is no edge in $T'$ between points from distinct components from $\CC$. 
	Let $\Gamma$ be a graph with $k+\ell$ nodes $\CC\cup\SSS$ and the following edges:
	\begin{itemize}
		\item $s_i s_j$, whenever $p_i p_j$ is an edge in $T'$,
		\item $s_i C_j$, whenever there exists a point $p$ in $C_j$ such that $p_i p$ is an edge in $T'$.
	\end{itemize}
	Such a graph $\Gamma$ is not necessarily a tree, however we claim that it is connected. 
	This follows from the fact that any path in $T'$ can be mapped into a walk in $\Gamma$ 
	by replacing each vertex from a component $C_i$ with the component $C_i$, each vertex from a line segment $s_j$ 
	with the line segment $s_j$, and then deleting the potential consecutive repetitions of vertices of $\Gamma$. 
	Because the tree $T'$ has maximum degree 5, it follows by definition of edges of $\Gamma$ 
	that the degree of each node $s_i\in \SSS$ of $\Gamma$ has degree at most 5. 
	Let $\tau$ be any spanning tree of $\Gamma$. It is clear that $\tau$ is a $\delta$-realizable topology tree.
\end{proof}

Recall that Lemma~\ref{ell_bound} states that, in case $\delta\ge \delta^*$, it holds that $\ell\leq 4k+1$.

\begin{lemma}
\label{number_of_topologies_k}
    If $\ell\leq 4k+1$, then there are at most $\left(\OO(k)\right)^{5k}$ topology trees
    and they can be generated in $\left(\OO(k)\right)^{5k}$ time.
\end{lemma}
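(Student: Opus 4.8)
The plan is to bound the number of topology trees by first bounding the number of labeled trees on $k+\ell$ nodes, and then noting that the two extra constraints only restrict the count further. Since $\ell\le 4k+1$, the total number of nodes is $m := k+\ell \le 5k+1 = \OO(k)$. By Cayley's formula there are $m^{m-2}$ labeled trees on $m$ nodes, so trivially there are at most $(5k+1)^{5k-1} = (\OO(k))^{5k}$ trees on our node set before imposing conditions a) and b). Imposing a) (each $\SSS$-node has degree at most $5$) and b) (no two $\CC$-nodes adjacent) can only decrease this count, so the bound $(\OO(k))^{5k}$ on the number of topology trees follows immediately. (One could get a better constant in the exponent by exploiting a) and b) directly, but for an FPT bound the crude estimate suffices.)

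For the generation part, the first approach I would take is brute force with a filter: enumerate all labeled trees on the $m$ nodes — for instance via Prüfer sequences, which are exactly the $m^{m-2}$ strings in $[m]^{m-2}$ and which can each be decoded to a tree in $\OO(m\log m)$ time — and for each decoded tree check conditions a) and b) in $\OO(m)$ time, discarding those that fail. This enumerates each topology tree, with total running time $m^{m-2}\cdot \OO(m\log m) = (\OO(k))^{5k}$, as claimed. No cleverness is needed here because the target bound already absorbs polynomial-in-$m$ (hence polynomial-in-$k$) factors.

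I do not expect any real obstacle in this lemma; it is a bookkeeping statement. The only point requiring a line of care is making sure the node count is genuinely $\OO(k)$ rather than depending on $n$: this is exactly where the hypothesis $\ell\le 4k+1$ (supplied by Lemma~\ref{ell_bound}) is used, collapsing the potentially $\Theta(n)$ components of $\CC$ into at most $4k+1$ nodes. Everything else is Cayley's formula plus a trivial filtering argument, so the write-up should be short.
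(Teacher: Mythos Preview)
Your proposal is correct and essentially matches the paper's proof: both use Cayley's formula on $m=k+\ell\le 5k+1$ nodes for the count, and brute-force enumeration with filtering for generation. The only cosmetic difference is that you enumerate via Pr\"ufer sequences, whereas the paper exploits condition~b) (every edge touches $\SSS$) to enumerate by choosing the $\le 5$ neighbors of each of the $k$ nodes in $\SSS$; both yield the same $(\OO(k))^{5k}$ bound.
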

\begin{proof}
	Using Cayley's formula for the number of spanning trees of a labeled complete graph, 
	we get that that the number of topology trees is at most 
	$(k+\ell)^{k+\ell-2}\le \left(5k+1\right)^{5k-1}$. 
	To construct a topology tree, it is enough to determine the neighbors of each node from $\SSS$. 
	Because each node from $\SSS$ has degree at most 5 in a topology tree, 
	we can use brute force to generate all topology trees on $k+\ell\le 5k+1$ 
	nodes in $\left(\OO(k)\right)^{5k}$ time.
\end{proof}

Note that, even with more careful estimates, we could not get a bound below $k^{\OO(k)}$
because there are at least $\frac{k!}{2}=k^{\Omega(k)}$ ways to construct a path of length $k$ 
out of elements of $\SSS$.

\refstepcounter{steps}
\paragraph{DStep~\thesteps.}\label{step:inside_loop} 
Given a topology tree $\tau$, in this step we will verify whether it is $\delta$-realizable.
Therefore, because of Lemma~\ref{number_of_topologies_k}, we will execute this step $k^{\OO(k)}$ times.
For the discussion, we consider a fixed topology tree $\tau$.

We observe that if a node $C$ from $\CC$ is a separating vertex of $\tau$, which is equivalent to saying that $C$ has degree at least 2 in $\tau$, then we can treat each part of the tree $\tau$ that is ``separated'' by $C$ independently. This motivates the following definition of a \emph{significant topology subtree} of $\tau$. 

If we remove the nodes $\CC$ from $\tau$, we get a forest. To each tree $\tau'$ in this forest, 
we add all nodes from $\CC$ that are adjacent in $\tau$ to some vertex in $\tau'$, with the corresponding edges. 
The resulting tree is a \emph{significant topology subtree} of $\tau$. 
See Figure~\ref{fig:topologytree}.
Let us state a few observations about significant topology subtrees that can be checked easily.
\begin{enumerate}[a)]
    \item Each significant topology subtree of $\tau$ is an induced subtree of $\tau$.
    \item Each node from $\CC$ that is part of a significant topology subtree $\tau'$ of $\tau$, 
		has degree exactly 1 in $\tau'$. If there is another vertex of degree 1 in $\tau'$, 
		it must have degree 1 in $\tau$ as well.
	\item Each significant topology subtree has maximum degree at most $5$.
    \item The union of all significant topology subtrees of $\tau$ is the whole $\tau$.
    \item An intersection of any two significant topology subtrees of $\tau$ is either empty 
		or a graph with one node from $\CC$.
	\item Each node from $\SSS$ belongs to exactly one significant topology subtree of $\tau$.
    \item There are at most $k$ significant topology subtrees of $\tau$.
\end{enumerate}

The definition of $\delta$-realizability can now be naturally extended to significant topology subtrees. 
We say that a significant topology subtree $\tau'$ is \emph{$\delta$-realizable}, 
if there exist points $p_j\in s_j$, for each vertex $s_j\in\SSS$ from $\tau'$, such that
\begin{enumerate}[a)]
    \item for each edge $s_i s_j$ in $\tau'$, where $s_i,s_j\in \SSS$, it holds $d(p_i,p_j)\leq\delta$, and
    \item for each edge $C_i s_j$ in $\tau'$, where $C_i\in \CC$ and $s_j\in \SSS$, there exists a point $p\in C_i$ such that $d(p,p_j)\leq\delta$.
\end{enumerate}

\begin{lemma}
\label{significant}
	The topology tree $\tau$ is $\delta$-realizable if and only if 
	all of its significant topology subtrees are $\delta$-realizable.
\end{lemma}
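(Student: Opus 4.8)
The plan is to prove both directions of the equivalence by exploiting the structural observations about significant topology subtrees that were just listed, especially properties (d), (e), and (f). The key point is that significant topology subtrees overlap only in single nodes from $\CC$, and each such shared node has degree $1$ in every subtree containing it, so the "witness points" on the segments can be chosen independently in each subtree without any interaction.

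First I would prove the easy direction: if $\tau$ is $\delta$-realizable, then so is every significant topology subtree $\tau'$. Fix points $p_j\in s_j$ for all $j\in[k]$ witnessing $\delta$-realizability of $\tau$. By observation (a), $\tau'$ is an induced subtree of $\tau$, so every edge of $\tau'$ is also an edge of $\tau$; hence the same points $p_j$, restricted to the segments $s_j$ appearing in $\tau'$, satisfy conditions a) and b) in the definition of $\delta$-realizability for $\tau'$. This is immediate.

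For the converse, suppose every significant topology subtree is $\delta$-realizable. For each significant topology subtree $\tau'$, pick points $p^{\tau'}_j\in s_j$, one for each segment node $s_j$ of $\tau'$, witnessing $\delta$-realizability of $\tau'$. By observation (f), each segment $s_j\in\SSS$ belongs to exactly one significant topology subtree, so there is no conflict: for each $j\in[k]$ define $p_j := p^{\tau'}_j$ where $\tau'$ is the unique significant topology subtree containing $s_j$. I then need to check that these points witness $\delta$-realizability of the whole $\tau$. Take any edge $e$ of $\tau$. If $e=s_is_j$ with $s_i,s_j\in\SSS$, then since $\SSS$-nodes are never shared between subtrees and the subtrees cover all of $\tau$ (observation (d)), the edge $e$ lies in a single significant topology subtree $\tau'$ containing both $s_i$ and $s_j$, and $d(p_i,p_j)=d(p^{\tau'}_i,p^{\tau'}_j)\le\delta$ by the choice of the $p^{\tau'}$'s. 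If $e=C_is_j$ with $C_i\in\CC$, $s_j\in\SSS$, then again $e$ lies in the unique significant topology subtree $\tau'$ containing $s_j$, and by $\delta$-realizability of $\tau'$ there is a point $p\in C_i$ with $d(p,p_j)=d(p,p^{\tau'}_j)\le\delta$. This verifies both conditions, so $\tau$ is $\delta$-realizable.

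The only subtle point — and the thing I would state carefully — is why assigning $p_j$ from the unique subtree containing $s_j$ is consistent, i.e.\ why a segment never needs to carry two incompatible point choices. This is exactly observation (f): the $\CC$-nodes are the only nodes that can be shared, and a shared $\CC$-node never imposes a constraint of the form "use the same point of $C_i$ for edges in two different subtrees" — condition b) explicitly allows different points of $C_i$ for different edges. So there is genuinely no obstacle; the proof is a routine bookkeeping argument once the observations a), d), f) (and the remark in the text that a separating $\CC$-vertex lets the two sides be treated independently) are in hand. I expect the write-up to be short.
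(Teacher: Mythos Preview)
Your proposal is correct and follows essentially the same approach as the paper's proof: both use observation (f) to see that the per-subtree point choices glue consistently into a single choice $p_j\in s_j$ for each $j\in[k]$, and then check that every edge of $\tau$ lies in some significant topology subtree. The paper's write-up is terser (it simply asserts the final verification is ``clear''), while you spell out the edge-by-edge check explicitly; there is no substantive difference.
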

\begin{proof}
	If the topology tree $\tau$ is $\delta$-realizable, it is clear that 
	each of the significant topology subtrees of $\tau$ is also $\delta$-realizable.
	To prove the opposite direction, assume that all of significant topology subtrees 
	of $\tau$ are $\delta$-realizable. For each significant topology subtree $\tau'$ of $\tau$, 
	we can choose points $p_j\in s_j$ in each node $s_j\in \SSS$ of $\tau'$, such that
	\begin{enumerate}[\indent a)]
		\item for each edge $s_i s_j$ in $\tau'$, where $s_i,s_j\in \SSS$, 
			it holds $d(p_i,p_j)\leq\delta$, and
		\item for each edge $C_i s_j$ in $\tau'$, where $C_i\in \CC$ and $s_j\in \SSS$, 
			there exists a point $p\in C_i$ such that $d(p,p_j)\leq\delta$.
	\end{enumerate}
	This way we uniquely defined points $p_j\in s_j$, for each $j\in[k]$. This is because, 
	for each $s_j\in\SSS$, there exists exactly one significant topology subtree $\tau'$ of $\tau$ 
	that has $s_j$ as node. It is clear that such a choice of points $p_j\in s_j$, over all $j\in [k]$, 
	shows that the topology tree $\tau$ is $\delta$-realizable.
\end{proof}

We just showed that, to describe \textbf{DStep~\ref{step:inside_loop}}, 
it is enough to describe how to verify whether a given significant topology subtree $\tau'$ of $\tau$ 
is $\delta$-realizable. To describe the latter, we restrict our attention to a fixed 
significant topology subtree $\tau'$.
Let the set of nodes of $\tau'$ be $V'\subseteq \SSS\cup\CC$. 
We denote $\SSS'=\SSS\cap V'$ and $\CC'= \CC\cap V'$. Therefore $V'$ is the disjoint
union of $\SSS'$ and $\CC'$. By definition of $\tau'$, we know that $\SSS'$ is not empty. 
Let us choose a root $s_r\in\SSS'$ for $\tau'$. 
For each segment $s_i\in\SSS'$, let $\tau'(s_i)$ be the subtree
of $\tau'$ rooted at $s_i$. In particular $\tau'(s_r)=\tau'$.

Next, we will use dynamic programming bottom-up along $\tau'$ to compute
the possible locations of points $p_i\in s_i$ on line segments $s_i\in\SSS'$ 
that can yield $\delta$-realizability for the subtree of $\tau'$ rooted at $s_i$. 
More exactly, for each $s_i\in \SSS'$, we define
\begin{align*}
	X_i = \{ p_i\in s_i\mid~ &\text{we can select one point $q_\ell\in C_\ell$, 
										for each node $C_\ell\in \CC\cap V(\tau'(s_i))$,}\\
							&\text{and one point $p_j\in s_j$, 
										for each node $s_j\in \SSS\cap V(\tau'(s_i))$ with $j\neq i$,}\\
							&\text{such that for each edge $s_\ell s_j$ of $\tau'(s_i)$ 
										we have $d(p_\ell,p_j)\le \delta$}\\
							&\text{and for each edge $s_j C_\ell$ of $\tau'(s_i)$ 
										we have $d(p_j,q_\ell)\le \delta$}\}.
\end{align*}

We begin with leaves of $\tau'$. 
If we have a leaf $s_i$ from $\SSS'$, we then have $X_i=s_i$. 
For internal nodes of $\tau'$, we have the following recursive property.

\begin{lemma}
\label{le:internal}
	Let $s_i$ be an internal node in $\tau'$ from $\SSS'$. 
	Reindexing the nodes, if needed, let us assume that the children of $s_i$ 
	in $\tau'$ are	$s_1,\ldots,s_t$ and $C_1,\ldots, C_u$. 
	Then 
	\[ 
		X_i=\bigcap_{\ell=1}^{t} \big(X_{\ell}\oplus D(0,\delta)\big)\bigcap_{\ell=1}^{u} \big(C_\ell\oplus D(0,\delta)\big)
			\bigcap s_i.
	\]
\end{lemma}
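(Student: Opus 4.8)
The plan is to prove the claimed identity by double inclusion, relying only on two structural facts about $\tau'$: it is a tree, so the subtrees rooted at distinct children of $s_i$ are vertex-disjoint; and, by observation~b) above, every node from $\CC$ appearing in $\tau'$ is a leaf. First I would record the resulting decomposition of the rooted subtree $\tau'(s_i)$: its vertex set is the disjoint union $\{s_i\}\sqcup\bigsqcup_{m=1}^{t}V(\tau'(s_m))\sqcup\{C_1,\dots,C_u\}$, and its edge set is the disjoint union of the \emph{star} edges $\{s_is_m:m\in[t]\}\cup\{s_iC_\ell:\ell\in[u]\}$ with the edge sets $E(\tau'(s_m))$, $m\in[t]$; here the $C_\ell$ contribute no further vertices or edges precisely because they are leaves of $\tau'$, which is what makes the formula not miss any part of $\tau'(s_i)$. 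I would also recall that $A\oplus D(0,\delta)$ is exactly the set of points within distance $\delta$ of $A$, so membership of $p_i$ in $X_m\oplus D(0,\delta)$ (resp. $C_\ell\oplus D(0,\delta)$) is equivalent to the existence of $p_m\in X_m$ (resp. $q_\ell\in C_\ell$) with $d(p_i,p_m)\le\delta$ (resp. $d(p_i,q_\ell)\le\delta$).

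For the inclusion ``$\subseteq$'', I would take $p_i\in X_i$ together with a selection of points witnessing it, and read off the required facts: $p_i\in s_i$ is immediate; for a child segment $s_m$, the edge $s_is_m$ of $\tau'(s_i)$ forces $d(p_i,p_m)\le\delta$, while restricting the witnessing selection to the nodes of $\tau'(s_m)$ shows $p_m\in X_m$ (every edge of $\tau'(s_m)$ is an edge of $\tau'(s_i)$), hence $p_i\in X_m\oplus D(0,\delta)$; and for a child component $C_\ell$, the edge $s_iC_\ell$ provides a point $q_\ell\in C_\ell$ with $d(p_i,q_\ell)\le\delta$, so $p_i\in C_\ell\oplus D(0,\delta)$. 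For the reverse inclusion I would take $p_i$ in the right-hand side, choose for each $m\in[t]$ a point $p_m\in X_m$ with $d(p_i,p_m)\le\delta$ together with its witnessing selection on $\tau'(s_m)$, and for each $\ell\in[u]$ a point $q_\ell\in C_\ell$ with $d(p_i,q_\ell)\le\delta$, and glue these, together with $p_i$ on $s_i$, into a single selection over all of $\tau'(s_i)$ (well defined by the vertex-disjointness). Verifying the constraints defining $X_i$ is then a walk through the edge decomposition above: the star edges are satisfied by the choices of $p_m$ and $q_\ell$, and any remaining edge lies in some $E(\tau'(s_m))$ and is satisfied by the corresponding witnessing selection.

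I do not expect a genuine obstacle: this is a routine correctness argument for a bottom-up dynamic program over a tree. The only points that need care are making the vertex/edge decomposition of $\tau'(s_i)$ fully explicit, which is where observation~b) (components are leaves of a significant topology subtree) is used, and being precise on the two directions of the gluing, namely that restricting a global selection to a subtree $\tau'(s_m)$ produces a valid witness for $p_m\in X_m$, and conversely that witnesses living on vertex-disjoint subtrees can be combined without conflict.
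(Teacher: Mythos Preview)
Your proposal is correct and follows essentially the same approach as the paper's own proof: a double inclusion argument where, in one direction, you restrict a witnessing selection on $\tau'(s_i)$ to each child subtree $\tau'(s_m)$ to certify $p_m\in X_m$, and in the other direction you glue together witnesses from the vertex-disjoint child subtrees and the star edges at $s_i$. Your version is slightly more explicit about the vertex/edge decomposition of $\tau'(s_i)$ and about invoking observation~b) to ensure the $C_\ell$ contribute no further edges, but this is just added precision rather than a different idea.
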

\begin{proof}
	Consider any point $p_i$ in $X_i$. From the definition of $X_i$ this means that
	we can select one point $q_\ell\in C_\ell$, for each node $C_\ell\in \CC\cap V(\tau'(s_i))$
	and one point $p_j\in s_j$ for each node $s_j\in \SSS\cap V(\tau'(s_i))$ with $j\neq i$ such that
	\begin{itemize}
		\item for each edge $s_j s_\ell$ of $\tau'(s_i)$ we have $d(p_j,p_\ell)\le \delta$, and
		\item for each edge $s_j C_\ell$ of $\tau'(s_i)$ we have $d(p_j,q_\ell)\le \delta$.
	\end{itemize}
	Looking at the edges connecting $s_i$ to its children, we obtain 
	\begin{itemize}
		\item for each $\ell\in [t]$ we have $d(p_i,p_\ell)\le \delta$, and
		\item for each $\ell\in [u]$ we have $d(p_i,q_\ell)\le \delta$.
	\end{itemize}
	Moreover, because the definition of $X_i$ includes a condition for the whole subtree $\tau'(s_i)$
	and, for each child $s_\ell$ of $s_i$, we have $\tau'(s_\ell)\subset \tau'(s_i)$,
	we have 
	\begin{itemize}
		\item for each $\ell\in [t]$, the point $p_\ell$ belongs to $X_\ell$.
	\end{itemize}
	We conclude that the point $p_i$ belongs to $\bigcap_{j=1}^{t} \big(X_{j}\oplus D(0,\delta)\big)$.
	Because for each $\ell\in [u]$ we also have $q_\ell\in C_\ell$, we also 
	conclude that $p_i$ belongs to $\bigcap_{\ell=1}^{u} \big(C_\ell\oplus D(0,\delta)\big)$.
	This finishes the proof that $X_i$ is included in the right hand side.
	
	To see the other inclusion, consider one point $p_i$ on the right hand side of the equality we want to prove.
	We then have:
	\begin{itemize}
		\item $p_i\in s_i$;
		\item for each $\ell\in [t]$, the point $p_i$ belongs to $X_{\ell}\oplus D(0,\delta)$;
		\item for each $\ell\in [u]$, the point $p_i$ belongs to $C_\ell\oplus D(0,\delta)$.
	\end{itemize}
	We can rewrite these properties as
	\begin{itemize}
		\item $p_i\in s_i$;
		\item for each $\ell\in [t]$, there is some point $p_\ell \in X_\ell$ such that $d(p_i,p_\ell)\le \delta$;
		\item for each $\ell\in [u]$, there is some point $q_\ell \in C_\ell$ such that $d(p_i,q_\ell)\le \delta$.
	\end{itemize}
	For each $\ell\in [t]$, the property that $p_\ell\in X_\ell$ implies that we can find points in
	all the nodes in $\tau'(s_\ell)$ satisfying the definition for $X_\ell$.
	Since the subtrees $\tau'(s_1), \ldots, \tau'(s_t)$ are disjoint, the selection of points for those
	subtrees are for different nodes, and thus they do not interact. The points $p_i,p_1,\ldots,p_t, q_1,\dots,q_u$
	and the ones selected for the condition of $X_1, \ldots, X_t$ certify that $p_i\in X_i$ because
	each edge of $\tau'(s_i)$ appears in one of the conditions.
\end{proof}

All geometrical computations needed to compute $X_i$ are described in Section~\ref{sec:geometry}. 
We can compute $X_i$ with $t$ operations $\big(X_\ell\oplus D(0,\delta)\big)\cap s_i$, 
described in~\ref{geom:segmentationsegment}), 
$u$ operations $\big(C_\ell\oplus D(0,\delta)\big)\cap s_i$ described in~\ref{geom:unionsegment}),
and $t+u-1$ intersections of segmentations described in~\ref{geom:segmentations}). 
For each $\ell\in [u]$, the size of the segmentation $\big(C_\ell\oplus D(0,\delta)\big)\cap s_i$ 
is at most $|C_\ell|$, and using induction on the structure of $\tau'$ (the base of the induction are the leaves), 
we can see that each point from $\PP$ contributes at most one segment to $X_i$. 
Using that $\tau'$ has at most $k$ leaves from $\SSS$, we see that the size of $X_i$ is at most 
$|\PP|+|\SSS|=n$. 
Finally, note that $t+u \le 5$ because $\tau'$ has maximum degree at most $5$.
Assuming that $X_\ell$ is already available for each child $s_\ell\in \SSS'$ of $s_i$ (thus for all $\ell\in [t]$), 
and assuming that the Voronoi diagrams on $s_i$ for $C_\ell$ are available for each $\ell\in [u]$,
we can compute $X_i$ in $\OO(n)$ time. Recall that the Voronoi diagrams inside $s_i$ for each $C_\ell$ were 
computed in \textbf{DStep~\ref{step:Voronoi}}, and thus are available.
 
The significant topology subtree $\tau'$ is $\delta$-realizable if and only if $X_r$ is not empty.
We can compute the sets $X_i$ for all $s_i\in \SSS'$ bottom-up.
At each node of $\tau'$ from $\SSS'$ we spend $\OO(n)$ time.

We have to repeat the test for each significant topology subtree.
Recall that, by Lemma~\ref{significant}, a topology tree is $\delta$-realizable
if and only if all its significant topology subtrees are $\delta$-realizable.
Since each node of $\SSS$ appears exactly in one significant topology subtree, 
for each $s_j\in \SSS$ we compute the corresponding set $X_j$ exactly once.
It follows that we spend $\OO(kn)$ time for a topology tree.
We summarize.

\begin{lemma}
\label{le:onetree}
	Assume we have already performed \textbf{DStep~\ref{step:build_tree}} 
	and \textbf{DStep~\ref{step:Voronoi}}. 
	For any given topology tree $\tau$, we can decide whether $\tau$ is $\delta$-realizable performing
	$\OO(kn)$ operations. Here, an operation may include manipulating a number 
	that has a computation tree of depth $\OO(k)$ whose internal nodes are additions, 
	subtractions, multiplications, divisions or square root computations and whose
	leaves contain input numbers (including $\delta$).
\end{lemma}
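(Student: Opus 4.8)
The plan is to assemble the ingredients developed in the preceding discussion into a single bottom-up procedure and then account for its cost. First I would reduce the global test to a collection of local ones: by Lemma~\ref{significant}, the topology tree $\tau$ is $\delta$-realizable if and only if every significant topology subtree of $\tau$ is $\delta$-realizable, and by observation~(f) each node of $\SSS$ lies in exactly one significant topology subtree. Hence it suffices to show that a single significant topology subtree $\tau'$ can be tested in time $\OO(n \cdot |\SSS \cap V(\tau')|)$ and then sum: since the sets $\SSS \cap V(\tau')$ partition $\SSS$ and there are at most $k$ significant topology subtrees (observation~(g)), the total is $\OO(kn)$.

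Fix a significant topology subtree $\tau'$ with node sets $\SSS' = \SSS\cap V'$ and $\CC' = \CC\cap V'$, pick any root $s_r\in\SSS'$ (nonempty by the definition of $\tau'$), and process $\tau'$ bottom-up, maintaining at each $s_i\in\SSS'$ the segmentation $X_i\subseteq s_i$ defined above. At a leaf $s_i$ of $\tau'$ we set $X_i = s_i$. At an internal node $s_i$ with children $s_1,\dots,s_t$ from $\SSS'$ and $C_1,\dots,C_u$ from $\CC'$, Lemma~\ref{le:internal} gives
\[
	X_i=\bigcap_{\ell=1}^{t} \big(X_{\ell}\oplus D(0,\delta)\big)\cap\bigcap_{\ell=1}^{u} \big(C_\ell\oplus D(0,\delta)\big)\cap s_i ,
\]
which I would evaluate with the primitives of Section~\ref{sec:geometry}: $t$ calls to~\ref{geom:segmentationsegment}) to form each $(X_\ell\oplus D(0,\delta))\cap s_i$, $u$ calls to~\ref{geom:unionsegment}) to form each $(C_\ell\oplus D(0,\delta))\cap s_i$ using the Voronoi diagrams of the points of $C_\ell$ on $s_i$ precomputed in DStep~\ref{step:Voronoi}, and $t+u-1$ calls to~\ref{geom:segmentations}) to intersect the resulting segmentations on the line supporting $s_i$. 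Since $\tau'$ has maximum degree at most $5$ (observation~(c)), $t+u\le 5$, so each node of $\tau'$ costs only $\OO(1)$ primitive calls. Once all $X_i$ are available, $\tau'$ is $\delta$-realizable exactly when $X_r\ne\emptyset$.

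The step that needs the most care, and which I expect to be the main obstacle, is bounding the size of each segmentation $X_i$ by $n$, so that each primitive call genuinely costs only $\OO(n)$ rather than blowing up along the tree. I would prove this by induction on the structure of $\tau'$, in the sharper form that each point of $\PP$ contributes at most one maximal subsegment to $X_i$: at a leaf $X_i=s_i$ has size $1$; at an internal node, the segmentation $(C_\ell\oplus D(0,\delta))\cap s_i$ has at most $|C_\ell|$ pieces by~\ref{geom:unionsegment}), the segmentation $(X_\ell\oplus D(0,\delta))\cap s_i$ has at most $|X_\ell|$ pieces by~\ref{geom:segmentationsegment}), and intersecting segmentations via~\ref{geom:segmentations}) only merges or trims pieces, so it does not create new ``witnesses''. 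Since the components $C_\ell$ occurring across $\tau'$ are pairwise disjoint subsets of $\PP$ and at most $k$ segment-leaves of $\tau'$ contribute their singleton, $|X_i|\le |\PP|+|\SSS| = n$. Thus each node of $\tau'$ is handled in $\OO(n)$ time, and summing over the at most $k$ segment-nodes of $\tau'$ (observation~(g)) gives $\OO(kn)$ per significant topology subtree; combined with the partition argument of the first paragraph this yields the claimed $\OO(kn)$ overall bound. Finally, for the computation-tree statement I would note that $X_i$ is obtained from the $X_\ell$ of its children, the precomputed Voronoi diagrams, and $\delta$ by $\OO(1)$ primitive calls, each of which by Section~\ref{sec:geometry} uses only $\OO(1)$ additions, subtractions, multiplications, divisions, square roots and comparisons and hence adds only $\OO(1)$ to the depth; since $\tau'$ has $\OO(k)$ nodes and therefore depth $\OO(k)$, the numbers describing $X_r$ have computation trees of depth $\OO(k)$ over the input numbers (including $\delta$), as stated.
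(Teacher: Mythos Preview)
Your proposal is correct and follows essentially the same approach as the paper, which defers most of the argument to the discussion preceding the lemma; you have simply written that discussion out explicitly. Two small remarks: your phrase ``intersecting segmentations \dots\ only merges or trims pieces'' is a bit loose (an intersection of segmentations of sizes $N_1,N_2$ can have up to $N_1+N_2$ pieces), but your intended witness-tracing argument---every endpoint of $X_i$ traces back to a point of $\PP$ in some $C_\ell$ of $\tau'(s_i)$ or to a segment-leaf---is exactly the paper's induction; and your citation of observation~(g) for ``at most $k$ segment-nodes of $\tau'$'' should rather be the trivial bound $|\SSS'|\le|\SSS|=k$.
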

\begin{proof}
	Correctness and the bound on the number of operations follows from the discussion.
	It only remains to show the property about the computation tree of numbers.
	Each component of a segmentation $X_i$ that corresponds to a node $v$ in some rooted topological subtree $\tau'$ of $\tau$ is computed with $\OO(1)$ operations from input numbers or components of the segmentations that correspond to the children of $v$ in $\tau'$. Since each significant topology tree $\tau'$ has depth
	at most $k+1$, the claim follows.
	(Note that the numbers may participate in many more comparisons.)
\end{proof}

In \textbf{DLoop}, we try each topology tree $\tau$ 
and perform \textbf{DStep~\ref{step:inside_loop}} for $\tau$. 
If for some topology tree we find that it is $\delta$-realizable, we return \textsl{TRUE}.
If the loop finishes without finding any $\delta$-realizable topology tree, 
we return \textsl{FALSE}.

Because of our future use in the optimization version of the problem, 
we decouple the running time of \textbf{DStep~\ref{step:build_tree}} 
and \textbf{DStep~\ref{step:Voronoi}}.

\begin{theorem}
\label{decision_theorem}
	Assume that we have an instance for \connectivity with $k$ line segments and $n-k$ points ($\delta$ is not part of the input).
	After a preprocessing of $\OO(k^2 n \log n)$ time, for any given $\delta$,
	we can solve the decision version \dconnected performing $k^{\OO(k)}n $ operations. 
	Here, an operation may include manipulating a number that has a computation tree of depth $\OO(k)$ whose
	internal nodes are additions, subtractions, multiplications, divisions or square root computations
	and whose leaves contain input numbers (including $\delta$).
\end{theorem}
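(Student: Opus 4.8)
The plan is to assemble Theorem~\ref{decision_theorem} directly from the four algorithmic pieces described above, being careful to separate the part that can be precomputed once (independently of $\delta$) from the part that must be redone for each query $\delta$. First I would argue that \textbf{DStep~\ref{step:build_tree}} splits into a $\delta$-independent half and a $\delta$-dependent half: computing the MST $T$ of $\PP$ takes $\OO(n\log n)$ time and does not depend on $\delta$, so it goes into the preprocessing; the subsequent deletion of edges longer than $\delta$ and the formation of the components $\CC=\{C_1,\dots,C_\ell\}$ takes only $\OO(n)$ time once $\delta$ is known, and the early rejection when $\ell>4k+1$ is justified by Lemma~\ref{ell_bound}. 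For \textbf{DStep~\ref{step:Voronoi}}, the key observation is that Voronoi diagrams must be available on each segment $s_j$ for the points in each component $C_i$; but to avoid knowing $\CC$ in advance, I would instead compute in the preprocessing, for each $s_j\in\SSS$, the Voronoi diagram on $s_j$ of the \emph{entire} point set $\PP$. By~\ref{geom:voronoi}) this is $\OO(kn\log n)$ time total, and since every $C_i$ is a subset of $\PP$, the Voronoi diagram on $s_j$ restricted to any $C_i$ can be extracted (or recomputed) in $\OO(n)$ time once $\delta$ and hence $\CC$ are known — or, more cleanly, one simply runs \textbf{DStep~\ref{step:Voronoi}} as stated inside the query phase in $\OO(kn\log n)$ time; to get the claimed $k^{\OO(k)}n$ query bound without the $\log n$ factor one does need to push the per-segment Voronoi computation of $\PP$ into preprocessing. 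I would therefore put the $\OO(n\log n)$ MST computation and the $\OO(kn\log n)=\OO(k^2 n\log n/k)$... more simply, the $k$ Voronoi computations costing $\OO(kn\log n)$ total, into preprocessing; summing gives $\OO(n\log n + kn\log n) = \OO(kn\log n)$, which is within the stated $\OO(k^2 n\log n)$ bound with room to spare.

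Next I would handle the query phase. Given $\delta>0$: in $\OO(n)$ time form $\CC$ from $T$ by deleting long edges; if $\ell>4k+1$ return \textsl{FALSE} (correct by Lemma~\ref{ell_bound}); otherwise restrict each precomputed Voronoi diagram on $s_j$ to each component $C_i$, which by the structure of Voronoi diagrams on a line (\ref{geom:voronoi})) costs $\OO(n)$ per segment, so $\OO(kn)$ total. Then run \textbf{DLoop}: by Lemma~\ref{number_of_topologies_k} there are $(\OO(k))^{5k}=k^{\OO(k)}$ topology trees, each generable in that total time, and for each one Lemma~\ref{le:onetree} says we can test $\delta$-realizability in $\OO(kn)$ operations assuming \textbf{DStep~\ref{step:build_tree}} and \textbf{DStep~\ref{step:Voronoi}} are done. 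Multiplying, the loop costs $k^{\OO(k)}\cdot\OO(kn)=k^{\OO(k)}n$ operations. The correctness of returning \textsl{TRUE} exactly when some topology tree is $\delta$-realizable is Lemma~\ref{topology}, which states that a $\delta$-realizable topology tree exists iff $\delta\ge\delta^*$, i.e.\ iff \dconnected has answer \textsl{TRUE}; the $\delta=0$ case is disposed of separately in $\OO(n)$ time as noted before \textbf{DStep~\ref{step:build_tree}}.

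The statement about computation trees is inherited verbatim from Lemma~\ref{le:onetree}: every number produced in \textbf{DStep~\ref{step:inside_loop}} lies in a segmentation $X_i$ whose components are built by $\OO(1)$ arithmetic operations and square roots (via the geometric primitives in Section~\ref{sec:geometry}) from input numbers, the parameter $\delta$, or components associated with children in a significant topology subtree of depth at most $k+1$; hence the computation-tree depth is $\OO(k)$. The Voronoi-diagram numbers computed in preprocessing have computation-tree depth $\OO(1)$ over the inputs (each is a bisector–line intersection, by~\ref{geom:voronoi})), so they do not worsen the bound when used as ``inputs'' in the query phase. I would close by stating the total: preprocessing $\OO(n\log n + kn\log n)\subseteq\OO(k^2 n\log n)$, query $\OO(n)+\OO(kn)+k^{\OO(k)}n = k^{\OO(k)}n$, which is the claim.

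I expect the main obstacle to be the bookkeeping around \textbf{DStep~\ref{step:Voronoi}} in the query phase, namely ensuring the per-query cost is genuinely $k^{\OO(k)}n$ and not $k^{\OO(k)}n\log n$: the Voronoi diagrams on the segments must be prepared for the components $\CC$, but $\CC$ depends on $\delta$ and hence is unknown at preprocessing time, so one has to argue that the $\delta$-independent Voronoi diagram of all of $\PP$ on each $s_j$ (computed once in $\OO(kn\log n)$ preprocessing) can be cut down to the diagram of any $C_i\subseteq\PP$ in $\OO(n)$ time — this follows because restricting a line Voronoi diagram to a subset of sites only requires re-resolving cells between consecutive surviving sites, a linear-time merge along the line, but it needs to be spelled out. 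Everything else is a direct bookkeeping assembly of the lemmas already proved.
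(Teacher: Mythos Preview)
Your overall assembly is right and matches the paper's structure: MST in preprocessing, components and the early $\ell>4k+1$ rejection via Lemma~\ref{ell_bound}, correctness of \textbf{DLoop} via Lemma~\ref{topology}, the $k^{\OO(k)}$ count from Lemma~\ref{number_of_topologies_k}, and the $\OO(kn)$ per-tree cost and computation-tree depth from Lemma~\ref{le:onetree}. The one genuine gap is precisely the point you flagged as the main obstacle, and your proposed resolution does not work.

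You claim that from the precomputed Voronoi diagram of all of $\PP$ on a segment $s_j$ you can obtain, in $\OO(n)$ time, the Voronoi diagram on $s_j$ of any subset $C_i\subseteq\PP$ by ``re-resolving cells between consecutive surviving sites''. This is false: when sites are deleted, points of $C_i$ that had \emph{no} cell in the full diagram can acquire a cell in the restricted one. Concretely, take three collinear sites at distances $1,2,3$ from the line supporting $s_j$, all projecting to the same point; only the nearest one appears in the diagram of $\PP$, but if it lies in $\PP\setminus C_i$ then the site at distance $2$ suddenly appears in the diagram of $C_i$. So the diagram of $C_i$ is not a coarsening of the diagram of $\PP$, and the linear ``merge along the line'' you describe cannot recover it. Since operation~\ref{geom:unionsegment}) in \textbf{DStep~\ref{step:inside_loop}} really uses the Voronoi diagram of $C_\ell$ (not of $\PP$), this breaks the $k^{\OO(k)}n$ query bound.

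The paper sidesteps this by a different preprocessing, and this is exactly where the factor $k^2$ in the bound comes from. The observation is that as $\delta$ varies there are only $\OO(k)$ relevant thresholds, namely the lengths $|e_1|\ge\cdots\ge|e_{4k+1}|$ of the $4k+1$ longest MST edges; across all admissible $\delta$ the families $\CC$ arise from the graphs $T_0=T$, $T_i=T_{i-1}-e_i$, and altogether only $\OO(k)$ \emph{distinct} components ever occur. One therefore precomputes, for each of these $\OO(k)$ components and each of the $k$ segments, the Voronoi diagram on that segment --- $\OO(k^2)$ diagrams at $\OO(n\log n)$ each, giving the stated $\OO(k^2 n\log n)$ preprocessing. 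At query time the correct diagrams are then simply looked up in $\OO(1)$, and the rest of your argument goes through unchanged.
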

\begin{proof}
	We have shown that \textbf{DStep~\ref{step:inside_loop}} requires linear number of steps in $n$ and that the number of repetitions of \textbf{DLoop} depends only on $k$. Hence, to prove the theorem, we need to reduce the $\log n$ factor from \textbf{DStep~\ref{step:build_tree}} and \textbf{DStep~\ref{step:Voronoi}} with preprocessing.

    The main part of \textbf{DStep~\ref{step:build_tree}} is computing a MST on $n-k$ points, which takes $\OO(n\log n)$ time and is independent of $\delta$. Hence, it can be done with preprocessing. The rest of \textbf{DStep~\ref{step:build_tree}} (ie. defining the clusters $\CC$) can be implemented in time $\OO(k)$ for any $\delta$.

	For \textbf{DStep~\ref{step:Voronoi}}, we observe that $\delta$ can be classified into $\OO(k)$ different
	intervals of values that will give the same clusters $\CC$ and for which \textbf{DStep~\ref{step:Voronoi}}
	is the same. More precisely, let $e_1,\ldots, e_{4k+1}$ be 
	$4k+1$ longest edges in the MST $T$ for $\PP$, obtained after preprocessing for \textbf{DStep~\ref{step:build_tree}} described above, sorted such that $|e_1|\ge |e_2|\ge \ldots\ge |e_{4k+1}|$.
	For each $i\in [4k]$ and each $\delta$ in the interval $(|e_i|,|e_{i+1}|]$ we will have the same family
	$\CC$ of $i+1$ connected componenents, namely those in the graph $T-\{ e_1,\dots, e_i \}$.
	For each $\delta \ge |e_1|$, we have a single component in $\CC$.
	For each $\delta< |e_{4k+1}|$, we know that $\delta < \delta^*$ because of Lemma~\ref{ell_bound}.
	Therefore, we can consider the $\OO(k)$ different connected components that 
	appear in the graphs $T_0=T$ and $T_i=T_{i-1}-e_i$, where $i\in [4k]$.
	For each such connected component $C$ and each segment $s\in \SSS$,
	we compute the Voronoi diagram on $s$ of the points of $C$ using \textbf{DStep~\ref{step:Voronoi}}.
	In total we compute $\OO(k^2)$ Voronoi diagrams, and each of them takes $\OO(n \log n)$ time. This can all be done with preprocessing, hence \textbf{DStep~\ref{step:Voronoi}} can be implemented in time $\OO(1)$ for any $\delta$.
	
    The depths of the computation trees of numbers used in \textbf{DStep~\ref{step:build_tree}} 
	and \textbf{DStep~\ref{step:Voronoi}} are $\OO(1)$. 

	Consider now that we are given a value $\delta$ after the just described preprocessing. If $\delta< |e_{4k+1}|$, we return \textsl{FALSE}.	Otherwise, \textbf{DStep~\ref{step:build_tree}} and \textbf{DStep~\ref{step:Voronoi}} now require only $\OO(k)$ time. We perform \textbf{DLoop} iterating over all topology trees.
    The correctness is proven with Lemma~\ref{topology} and Lemma~\ref{le:onetree}.
	
	By Lemma~\ref{number_of_topologies_k}, \textbf{DStep~\ref{step:inside_loop}} is repeated $k^{\OO(k)}$ times,
	and each such iteration performs $\OO(k n)$ operations because of Lemma~\ref{le:onetree}.
	In total we perform $k^{\OO(k)}n$ operations. 
	Numbers in each iteration of \textbf{DLoop} are computed independently of the numbers computed in another iteration,
	and therefore we can use the bound on the depth of computation trees of Lemma~\ref{le:onetree} for each of them.
\end{proof}

\begin{corollary}
\label{co:decision}
	The decision problem \dconnected for $k$ line segments and $n-k$ points can be solved
	performing $k^{\OO(k)}n \log n $ operations. 
	Here, an operation may include a number that has a computation tree of depth $\OO(k)$ whose
	internal nodes that are additions, subtractions, multiplications, divisions or square root computations
	and whose leaves contain input numbers (including the input value $\delta$).
\end{corollary}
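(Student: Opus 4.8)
The plan is to derive this immediately from Theorem~\ref{decision_theorem}. Given an instance of \dconnected{} --- that is, $k$ line segments, $n-k$ points, and a target value $\delta\ge 0$ --- I would first run the preprocessing promised by Theorem~\ref{decision_theorem}. That preprocessing takes $\OO(k^2 n\log n)$ time and depends only on the segments and points, not on $\delta$; since $k^2 = k^{\OO(k)}$, its cost is $k^{\OO(k)}n\log n$ operations. I would then invoke the decision procedure of Theorem~\ref{decision_theorem} on the given value $\delta$, which by that theorem performs $k^{\OO(k)}n$ operations and returns the correct answer. Adding the two contributions yields $k^{\OO(k)}n\log n + k^{\OO(k)}n = k^{\OO(k)}n\log n$ operations, as claimed.

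The one point to check is the bound on the computation trees of the numbers involved, so that the corollary's qualification on "operations" is met. The numbers produced during the preprocessing --- the MST of $\PP$, the $\OO(k)$ truncated forests $T_0,\dots,T_{4k}$, and the $\OO(k^2)$ Voronoi diagrams on the segments --- have computation trees of depth $\OO(1)$ over the input numbers, as observed in the proof of Theorem~\ref{decision_theorem}. The numbers produced while testing a single topology tree, which are the ones that determine the final output, have computation trees of depth $\OO(k)$ with internal nodes being additions, subtractions, multiplications, divisions, and square roots and leaves being input numbers (including $\delta$), by Lemma~\ref{le:onetree}. Composing the two stages still gives computation trees of depth $\OO(k)$ of the required form, which matches the statement of the corollary.

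I do not expect a genuine obstacle here: the corollary is just Theorem~\ref{decision_theorem} with the preprocessing and the single query folded into one call. The only mild care needed is to notice that the additive $\OO(k^2 n\log n)$ preprocessing term is absorbed into $k^{\OO(k)}n\log n$, and that the $\OO(1)$-depth computation trees of the preprocessing stage and the $\OO(k)$-depth computation trees of the topology-tree tests compose to depth $\OO(k)$ rather than to something larger.
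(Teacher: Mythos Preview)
Your proposal is correct and matches the paper's intent: the corollary is stated without proof because it is an immediate consequence of Theorem~\ref{decision_theorem}, obtained exactly as you describe by adding the $\OO(k^2 n\log n)$ preprocessing cost to the $k^{\OO(k)} n$ query cost and noting that both computation-tree depth bounds compose to $\OO(k)$.
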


\section{Introducing $h$-square root functions}
\label{sec:root_functions}

When using parametric search, we will need to trace the boundary of the segmentations $X_i(\delta)$ 
as a function of $\delta$. In this section we introduce and discuss the properties
of the functions that will appear.

For any natural\footnote{We define that $0$ is a natural number.} number $h$, we define $h$-square root functions recursively. A 0-square root function is any linear function. For $h\geq 1$, an $h$-square root function is any function of the form $f(x)=a_1g(x)+a_2+a_3\sqrt{\pm x^2+a_4g(x)^2+a_5g(x)+a_6}$, where $a_1,a_2,a_3,a_4,a_5,a_6\in\RR$,  
and $g(x)$ is a $(h-1)$-square root function.  
The domain of an $h$-square root function is all such $x\in\RR$ for which all the square roots that appear inside them have non-negative arguments. Note that an $h$-square root function is also an $h'$-square root function
for all $h'\ge h$ because we may take $a_1=1$ and $a_2=a_3=0$.

The following lemma presents the setting where we will meet the $h$-square root functions. Note that this setting can occur in computations in~\ref{geom:circleline}), that is, when computing the intersection of a circle with a line.

\begin{lemma}
\label{le:sqrt_alternative}
	Let $q,e$ and $f$ be vectors in $\RR^2$, $||e||=||f||=1$ and let $g(x)$ be an $(h-1)$-square root function, 
	for some $h\in\ZZ^+$. 
	Then any continuous function $t(x)$ that solves the equation
	$$||q+t(x)e-g(x)f||^2=x^2$$
	is an $h$-square root function.
\end{lemma}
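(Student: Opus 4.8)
The plan is to expand the equation $\|q + t(x)e - g(x)f\|^2 = x^2$ into a quadratic in $t = t(x)$, solve it with the quadratic formula, and verify that the resulting expression matches the syntactic shape of an $h$-square root function given that $g$ is an $(h-1)$-square root function. First I would write $v(x) := q - g(x)f$, so the equation reads $\|v(x) + t e\|^2 = x^2$. Using $\|e\| = 1$, this expands to
\[
	t^2 + 2\,(e\cdot v(x))\,t + \|v(x)\|^2 - x^2 \;=\; 0 .
\]
Now $e \cdot v(x) = e\cdot q - (e\cdot f)\,g(x)$, which is of the form (constant)$\cdot g(x) +$ (constant), and $\|v(x)\|^2 = \|q\|^2 - 2(q\cdot f)\,g(x) + g(x)^2$ since $\|f\| = 1$; this is of the form (const)$\,g(x)^2 +$ (const)$\,g(x) +$ (const). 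Hence the two roots are
\[
	t(x) \;=\; -\,(e\cdot v(x)) \;\pm\; \sqrt{\,(e\cdot v(x))^2 - \|v(x)\|^2 + x^2\,},
\]
and $-(e\cdot v(x)) = (e\cdot f)\,g(x) - (e\cdot q)$ already has the form $a_1 g(x) + a_2$.

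The remaining task is to check that the quantity under the square root, namely $R(x) := (e\cdot v(x))^2 - \|v(x)\|^2 + x^2$, has the form $+x^2 + a_4 g(x)^2 + a_5 g(x) + a_6$ required by the definition (with the sign of $x^2$ being $+1$ here). Both $(e\cdot v(x))^2$ and $\|v(x)\|^2$ are quadratic polynomials in $g(x)$ with real coefficients — the first because $e\cdot v(x)$ is affine in $g(x)$, the second as computed above — so their difference is again a quadratic polynomial in $g(x)$, say $b_2 g(x)^2 + b_1 g(x) + b_0$. Adding $x^2$ gives $R(x) = x^2 + b_2 g(x)^2 + b_1 g(x) + b_0$, which is exactly the allowed radicand. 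Therefore any continuous $t(x)$ solving the equation equals, on its domain, one of the two branches $a_1 g(x) + a_2 + a_3\sqrt{x^2 + a_4 g(x)^2 + a_5 g(x) + a_6}$ with $a_3 \in \{-1, +1\}$ (a continuous solution must coincide with a single branch where $R(x) > 0$, and the branches agree where $R(x) = 0$), and $g$ being an $(h-1)$-square root function makes this an $h$-square root function by definition.

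I would also record the degenerate case for cleanliness: if the leading coefficient of $t^2$ vanished the analysis would differ, but here it is always $1$ because $\|e\| = 1$, so the equation is genuinely quadratic in $t$ and the quadratic formula applies without case distinction. The only mild subtlety — and the one point I would spell out rather than wave at — is the continuity argument: a priori a solution could jump between the two branches, but it can only do so at a point where the two branches coincide, i.e.\ where $R(x) = 0$, and at such points the single expression $a_1 g(x) + a_2 + a_3\sqrt{R(x)}$ (for either fixed choice of sign $a_3$) still equals $t(x)$; so the chosen branch extends continuously and the formula is valid on the whole domain. I do not expect any real obstacle here — the lemma is essentially a bookkeeping verification that the quadratic formula preserves the $h$-square root shape — the "hard part" is just making sure the radicand is written in precisely the normal form the definition demands, which the expansion above does.
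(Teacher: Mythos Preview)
Your proposal is correct and follows essentially the same route as the paper: expand the norm to get a monic quadratic in $t$, apply the quadratic formula, and check that the linear part $-(e\cdot v(x))$ has the form $a_1 g(x)+a_2$ while the radicand has the form $x^2 + a_4 g(x)^2 + a_5 g(x) + a_6$. The paper's proof is in fact terser than yours --- it writes out the discriminant explicitly and stops after exhibiting $t_1,t_2$ as $h$-square root functions, without your additional remarks on the leading coefficient or on continuity and branch-switching.
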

\begin{proof}
	The equation
	$||q+t(x)e-g(x)f||^2=x^2$ is equivalent to 
	\[
		t(x)^2+\bigl[ 2(q-g(x)f)\cdot e\bigr] t(x)+||q-g(x)f||^2-x^2=0.
	\]
	The discriminant of this quadratic equation in $t(x)$ is 
	\begin{align*}
	\Delta(x)&=4((q-g(x)f)\cdot e)^2-4(||q-g(x)f||^2-x^2)\\
	 & = 4\Bigl((q\cdot e)^2-2(q\cdot e)(f\cdot e)g(x)+(e\cdot f)^2 g(x)^2-||q||^2+2(q\cdot f)g(x)-g(x)^2+x^2\Bigr)\\
	 & = 4\Bigl( x^2+ \bigl[(e\cdot f)^2 -1 \bigr] g(x)^2 + \bigl[2(q\cdot e)(f\cdot e) + 2(q\cdot f)  ] g(x) + \bigl[ (q\cdot e)^2-||q||^2\bigr]\Bigr).
	\end{align*}
	If we denote $\Tilde{\Delta}(x)=\frac{1}{4}\Delta(x)$, we have, 
	for all $x$ in the domain of $g$ for which $\Delta(x)\geq 0$,
	\begin{align*}
			t_1(x)&=e\cdot(g(x)f-q)-\sqrt{\Tilde{\Delta}(x)} = \bigl[e\cdot f\bigr] g(x) - \bigl[e\cdot q\bigr]-\sqrt{\Tilde{\Delta}(x)},\\
			t_2(x)&=e\cdot(g(x)f-q)+\sqrt{\Tilde{\Delta}(x)} = \bigl[e\cdot f\bigr] g(x) - \bigl[e\cdot q\bigr]+ \sqrt{\Tilde{\Delta}(x)},
		\end{align*}
	which are both $h$-square root functions.
\end{proof}

The next lemma will help us solve equations with $h$-square root functions.
\begin{lemma}
\label{le:sqrt_solutions}
	Let $f(x)$ and $g(x)$ be $h$-square root functions for $h\in\NN$. 
	Then all of the solutions of $f(x)=g(x)$ are also roots of a polynomial of degree at most $4^h$ in $x$. 
	The coefficients of this polynomial can be computed from parameters in $f$ and $g$ in $2^{\OO(h)}$ 
	steps by using only multiplications, additions and subtractions.
\end{lemma}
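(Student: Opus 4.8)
The plan is to prove Lemma~\ref{le:sqrt_solutions} by induction on $h$, reducing an equation between two $h$-square root functions to a polynomial equation by repeatedly isolating and squaring away the outermost square roots.

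\medskip

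\textbf{Base case.} For $h=0$, both $f$ and $g$ are linear, so $f(x)=g(x)$ is a linear (degree $\le 1 = 4^0$) equation, and its coefficients are obtained by a single subtraction.

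\medskip

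\textbf{Inductive step.} Assume the claim holds for $h-1$. Write $f(x)=a_1g_1(x)+a_2+a_3\sqrt{R_f(x)}$ and $g(x)=b_1g_2(x)+b_2+b_3\sqrt{R_g(x)}$, where $g_1,g_2$ are $(h-1)$-square root functions and $R_f(x)=\pm x^2+a_4g_1(x)^2+a_5g_1(x)+a_6$, similarly for $R_g$. First I would rearrange $f(x)=g(x)$ into the form $A(x)\sqrt{R_f(x)}+B(x)\sqrt{R_g(x)}=C(x)$, where $A=a_3$, $B=-b_3$ are constants and $C(x)=b_1g_2(x)+b_2-a_1g_1(x)-a_2$. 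Square both sides once: this gives $A^2R_f(x)+B^2R_g(x)-C(x)^2 = -2AB\sqrt{R_f(x)R_g(x)}$, i.e. an equation of the form $P(x)=Q(x)\sqrt{R_f(x)R_g(x)}$ with $Q$ constant. Squaring a second time yields $P(x)^2 = Q^2 R_f(x)R_g(x)$, a polynomial equation in which no square roots appear \emph{except} those hidden inside $g_1(x)$ and $g_2(x)$. Crucially, each of $R_f$, $R_g$, $C$, hence $P$, is built from $g_1,g_2$ and $x$ using only $\pm,\times$ and constants, so $P(x)^2 - Q^2R_f(x)R_g(x) = 0$ is a polynomial identity in the quantities $x, g_1(x), g_2(x)$; call this polynomial $\Phi(x,g_1(x),g_2(x))$. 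To eliminate $g_1$ and $g_2$ I would treat $g_1(x)$ and $g_2(x)$ as roots of low-degree polynomial equations with coefficients that are themselves $(h-1)$-square root functions (indeed $g_i(x)$ satisfies the degree-$\le 2$ relation coming from its own outermost square root), or — more cleanly — apply the induction hypothesis: whatever equation an expression built from $g_1,g_2$ must satisfy, it reduces to a polynomial of degree $4^{h-1}$. The spurious roots introduced by squaring are harmless because we only claim the solution set is \emph{contained} in the root set of the polynomial.

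\medskip

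\textbf{Degree and complexity bookkeeping.} The cleanest bound comes from tracking degrees. Set $d_{h}$ to be the degree of the polynomial certifying $(h)$-square root equations; we have $d_0=1$. In the reduction above, $C(x)$ has ``degree'' at most $d_{h-1}$ as a polynomial in $x$ once $g_1,g_2$ are substituted by their polynomial expressions (which themselves need degree $d_{h-1}$), and $R_f,R_g$ have degree at most $2d_{h-1}$; after the first squaring we reach degree $\le 4d_{h-1}$, and after the second squaring $\le 4d_{h-1}$ again on the left (since $P$ already absorbed the first square) versus $\le 4d_{h-1}$ on the right, so $\deg \Phi \le 4 d_{h-1}$, giving $d_h \le 4\cdot 4^{h-1} = 4^h$, as claimed. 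For the arithmetic complexity, at each of the $h$ levels we perform a constant number of polynomial multiplications, additions, and subtractions on polynomials of degree $2^{\OO(h)}$, which costs $2^{\OO(h)}$ ring operations; summing over the $h$ levels keeps the total at $2^{\OO(h)}$. The main obstacle I anticipate is bookkeeping the substitution of the nested $(h-1)$-square root functions $g_1,g_2$ so that ``degree in $x$ after substitution'' is well-defined and multiplies correctly — the neatest way to finesse this is to never substitute numerically but instead carry everything symbolically as polynomials in the finitely many nested radicals and invoke the induction hypothesis once at the end, or equivalently to argue by a tower-of-field-extensions / resultant computation that eliminates $g_1(x)$ and $g_2(x)$ one radical at a time, each elimination at most quadrupling the degree.
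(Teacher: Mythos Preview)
Your induction on $h$ does not close as stated. After the two squarings you correctly arrive at an equation $\Phi(x,g_1(x),g_2(x))=0$, where $\Phi$ is a polynomial of total degree~$4$ in its three arguments and $g_1,g_2$ are $(h-1)$-square root functions. But this is \emph{not} an equation of the form ``$(h-1)$-square root function $=$ $(h-1)$-square root function'', so the induction hypothesis as you formulated it does not apply. The line ``once $g_1,g_2$ are substituted by their polynomial expressions (which themselves need degree $d_{h-1}$)'' is the symptom: $g_1$ and $g_2$ are not polynomials, and the induction hypothesis gives you no polynomial expression for them --- it only promises a degree-$4^{h-1}$ polynomial containing the solutions of an \emph{equality} of two such functions. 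Consequently your recurrence $d_h\le 4d_{h-1}$ is not justified by the argument you wrote.

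The paper fixes this by strengthening the inductive invariant rather than inducting on $h$ directly: it maintains, for $i=0,\dots,h$, a three-variable polynomial $P_i(X,Y,Z)$ of degree at most $4^i$ such that every solution of $f(x)=g(x)$ satisfies $P_i(x,f_{h-i}(x),g_{h-i}(x))=0$. At each step it substitutes the definitions $f_{h-i}=A_1+a_3\sqrt{A_2}$ and $g_{h-i}=B_1+b_3\sqrt{B_2}$ into $P_i$, expands, isolates the two new radicals, and squares twice to obtain $P_{i+1}$ of degree at most $4\cdot 4^i$. At $i=h$ the remaining $f_0,g_0$ are linear and the equation is polynomial in $x$ of degree $\le 4^h$. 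Your final parenthetical suggestion --- ``eliminate $g_1(x)$ and $g_2(x)$ one radical at a time, each elimination at most quadrupling the degree'' --- is exactly this; what is missing from your write-up is that the invariant has to be carried as a polynomial in $(x,f_{h-i},g_{h-i})$ of controlled degree, not merely as an equality of two square-root functions.
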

\begin{proof}
    Let $f_h(x)=f(x)$ be an $h$-square root function obtained from an $(h-1)$-square root function $f_{h-1}(x)$, 
	which was obtained from an $(h-2)$-square root function $f_{h-2}(x)$, \ldots , 
	which was obtained from a $0$-square root function $f_0(x)$. 
	In a similar way we define the functions $g_h(x),g_{h-1}(x),\ldots,g_0(x)$.

    We will transform the equation $f_h(x)=g_h(x)$ into the desired polynomial equation 
	by squaring it at most $2h$ times, each time also rearranging the terms a bit and using
	the replacement rule $\sqrt{u}^2=u$ multiple times. These transformations may introduce additional solutions, 
	but we keep all the original solutions.

	We show by induction on $i$ that, for each $i=0,\dots, h$, 
	there is a polynomial $P_i(X,Y_i,Z_i)$ of degree $4^i$ such
	that the solutions of $P_i(x,f_{h-i}(x),g_{h-i}(x))=0$ include the solutions of $f(x)=g(x)$.
	For the base case, $i=0$, it is obvious that the polynomial $P_0(X,Y_i,Z_i)=Z_i-Y_i$
	satisfies the condition because $P_0(x,f_h(x),g_h(x))=0$ is equivalent to $f_h(x)-g_h(x)=0$.
	
	Assume that we have the polynomial $P_i(X,Y_i,Z_i)$ for some $i$. We show how to compute 
	$P_{i+1}(X,Y_{i+1},Z_{i+1})$. The polynomial $P_i$ can be written as 
	\[
		P_i(X,Y_i,Z_i) = \sum_{%
						\begin{minipage}{2.5cm}\centering\tiny
							$\alpha+\beta+\gamma\le 4^i$\\
							$\alpha, \beta,\gamma\in \NN$
						\end{minipage}} c_{\alpha,\beta,\gamma}\,X^\alpha Y_i^\beta Z_i^\gamma,
	\]
	for some coefficients $c_{\alpha,\beta,\gamma}\in \RR$,
	and we have $P_i(x,f_{h-i}(x),g_{h-i}(x))=0$.
	We substitute in the latter equation $f_{h-i}(x)$ and $g_{h-i}(x)$ by their definition
	using $f_{h-i-1}(x)$ and $g_{h-i-1}(x)$, respectively.
	More precisely, and to shorten the expressions, we have for some $a_1,\ldots,a_6,b_1,\ldots,b_6\in \RR$,
	\begin{align*}
		A_1(x) &= a_1 f_{h-i-1}(x)+a_2\\
		A_2(x) &= \pm x^2 + a_4 f_{h-i-1}(x)^2 + a_5 f_{h-i-1}(x)+a_6\\
		f_{h-i}(x) &= A_1(x) + a_3 \sqrt{A_2(x)}\\
		B_1(x) &= b_1 g_{h-i-1}(x)+b_2 \\
		B_2(x) &= x^2 + b_4 g_{h-i-1}(x)^2 + b_5 g_{h-i-1}(x)+b_6\\
		g_{h-i}(x) &= B_1(x) + b_3 \sqrt{B_2(x)}.
	\end{align*}
	We thus get the equation
	\begin{align*}
			0 ~=~ \sum_{%
						\begin{minipage}{2.5cm}\centering\tiny
							$\alpha+\beta+\gamma\le 4^i$\\
							$\alpha, \beta,\gamma\in \NN$
						\end{minipage}} c_{\alpha,\beta,\gamma}\,x^\alpha \Bigl(A_1(x) + a_3 \sqrt{A_2(x)}\Bigr)^\beta \Bigl(B_1(x) + b_3 \sqrt{B_2(x)}\Bigr)^\gamma .
	\end{align*}
	We expand the terms $(A_1(x) + a_3 \sqrt{A_2(x)})^\beta$ and $(B_1(x) + b_3 \sqrt{B_2(x)})^\gamma$ 
	using the binomial theorem and, in the resulting equation, we replace 
	\begin{itemize}
		\item each term $(\sqrt{A_2(x)})^{\beta'}$ with even $\beta'$ by $A_2(x)^{\beta'/2}$;
		\item each term $(\sqrt{A_2(x)})^{\beta'}$ with odd $\beta'$ by $A_2(x)^{(\beta'-1)/2} \sqrt{A_2(x)}$;
		\item each term $(\sqrt{B_2(x)})^{\gamma'}$ with even $\gamma'$ by $A_2(x)^{\gamma'/2}$;
		\item each term $(\sqrt{B_2(x)})^{\gamma'}$ with odd $\gamma'$ by $A_2(x)^{(\gamma'-1)/2} \sqrt{B_2(x)}$.
	\end{itemize}
	We group the terms with a factor $\sqrt{A_2(x)}$ or $\sqrt{B_2(x)}$ remaining.
	For some polynomial $Q_1(X, Y, Z)$ of degree at most $4^i$, polynomials $Q_2(X, Y, Z)$ and $Q_3(X, Y, Z)$  of degree at most $4^{i}-1$ and polynomial $Q_4(X, Y, Z)$ of degree at most $4^{i}-2$ we get an equation
	\begin{align*}
			0 ~=~ &Q_1(x, f_{h-i-1}(x),g_{h-i-1}(x))+\\
				&Q_2(x, f_{h-i-1}(x),g_{h-i-1}(x)) (\sqrt{A_2(x)}) + \\
				&Q_3(x, f_{h-i-1}(x),g_{h-i-1}(x)) (\sqrt{B_2(x)}) + \\
				&Q_4(x, f_{h-i-1}(x),g_{h-i-1}(x)) (\sqrt{A_2(x)} \sqrt{B_2(x)}).
	\end{align*}
	We pass the terms with $\sqrt{A_2(x)}$ to one side and all the other terms to the other side. 
	We square the equation, expand each side, 
	replace each $(\sqrt{A_2(x)})^2$ with $A_2(x)$, and replace each $(\sqrt{B_2(x)})^2$ with $B_2(x)$.
	We are left with an equation where some terms include the factor $\sqrt{B_2(x)}$; all the other terms are polynomial in $x$,	$f_{h-i-1}(x)$ and $g_{h-i-1}(x)$.
	We collect on one side the terms with $\sqrt{B_2(x)}$, square both sides of the equation,
	and replace each $(\sqrt{B_2(x)})^2$ with $B_2(x)$.
	We are left with an equation that is polynomial in $x$, $f_{h-i-1}(x)$ and $g_{h-i-1}(x)$;
	this equation defines the polynomial $P_{i+1}(X,Y_{i+1},Z_{i+1})$.
	Since we have done reorganizations and have squared both sides of the equation twice,
	the degree of the polynomial $P_{i+1}$ is at most $4$ times the degree of $P_i$.
	Therefore $P_{i+1}$ has degree at most $4^{i+1}$.
	
	For $i=h$, we obtain a polynomial $P_{h}(X,Y_h,Z_h)$ of degree at most $4^h$ such that 
	the solutions to $P_{h}(x,f_0(x),g_0(x))=0$ contains the solutions for $f(x)=g(x)$.
	Note that the equation $P_{h}(x,f_0(x),g_0(x))=0$ may contain some additional 
	solutions that are added through the algebraic manipulation,
	possibly also solutions that are not in the domains of $f(x)$ or $g(x)$.
	The equation $P_{h}(x,f_0(x),g_0(x))=0$ is a polynomial of degree at most  $4^h$ in $x$ because
	$f_0(x)$ and $g_0(x)$ are linear. 
	
	For each $i\in [h]$, because the polynomial $P_i(X,Y_i,Z_i)$ has degree at most $4^i$, it is defined by $2^{\OO(i)}$ coefficients,
	and each of its coefficients comes from making $2^{\OO(i)}$ operations through the computation. 
	We conclude that all the polynomials can be computed in $2^{\OO(h)}$ time.
\end{proof}

\section{Parametric version}
\label{sec:param}

In this section we will solve the initial optimisation problem \connectivity for uncertainty regions given as line segments. 
Given a set $\SSS=\{s_1,s_2,\dots ,s_k\}$ of segments and a set $\PP=\{p_{k+1},p_{k+2},\dots, p_n\}$
of points in the plane, find $\delta^*$, which is the smallest $\delta\geq 0$, such that the decision problem \dconnected on inputs $\SSS$, $\PP$ and $\delta$ has the answer \textsl{TRUE}. 
We can shortly write $\delta^*=\connectivity(\SSS,\PP)=\min\{\delta \mid \dconnected(\SSS,\PP,\delta)\}.$

We will use parametric search. The idea is to simulate the decision algorithm described in Section~\ref{sec:decision}
for the unknown value $\delta^*$. Through the algorithm we maintain two values $\delta_m<\delta_M$
such that the interval $(\delta_m,\delta_M]$ contains $\delta^*$ and such that, for any $\delta\in (\delta_m,\delta_M)$, 
the algorithm branches in the same way, that is, the combinatorial decisions of the algorithm
are the same.
Thus, the algorithm has the same outline as it was used for describing the algorithm for the problem \dconnected. This means that it will be given in 4 parts: \textbf{Step~\ref{step:build_tree}}, \textbf{Step~\ref{step:Voronoi}}, \textbf{Loop} and \textbf{Step~\ref{step:inside_loop}}. These 4 parts will be analogous to the parts with the corresponding names in the algorithm for the problem \dconnected.

Throughout our algorithm, we will constantly update $\delta_m$ and $\delta_M$ such that the value of $\delta_m$ will never decrease, the value of $\delta_M$ will never increase and $\delta^*$ will be in the interval $(\delta_m,\delta_M]$. 
We will mostly update $\delta_m$ and $\delta_M$ by using \emph{parametric search among some set of values} $\Delta=\{\delta_1,\delta_2,\ldots, \delta_N\}$.  This means that we will discard the values from $\Delta$ outside the interval $(\delta_m,\delta_M)$ and, for the sake of simpler description, we will add the values $\delta_m$ and $\delta_M$ to $\Delta$. 
Then, we will sort the values in $\Delta$ and we will do a binary search to determine two consecutive values $\delta'_1<\delta'_2$, such that  $\dconnected(\SSS,\PP,\delta'_1)=\textsl{FALSE}$ and $\dconnected(\SSS,\PP,\delta'_2)=\textsl{TRUE}$. 
We will update the values $\delta_M=\delta'_2$ and $\delta_m=\delta'_1$. Clearly, it will hold $\delta_m<\delta^*\leq\delta_M$ and none of the values that were initially in $\Delta$ will be in the interval $(\delta_m,\delta_M)$. 
For this step in parametric search, we spend $\OO(N\log N)$ time plus the time needed to solve $\OO(\log N)$ decision problems.

We use the preprocessing of Theorem~\ref{decision_theorem}: after a preprocessing of $\OO( k^2 n \log n)$ time,
we can solve each decision problem performing $k^{\OO(k)}n$ operations. 
The preprocessing is performed only once. Afterwards, each parametric search among a set of $N$ values takes $\OO(N\log N)+k^{\OO(k)}n \log N$ steps.

\paragraph{Step~\ref{step:build_tree}.} To get an upper bound on $\delta^*$, we first choose arbitrary points 
$p_1\in s_1,\ldots, p_k\in s_k$ and compute a MST on points $p_1,p_2,\ldots, p_n$. 
We define $\delta_M$ as the maximum length of an edge in this MST. 
To set a proper lower bound on $\delta^*$, we run the decision algorithm for $\delta=0$ and, 
if it returns \textsl{TRUE}, we return $\delta^*=0$. Otherwise we define $\delta_m=0$. 
We see that $\delta_m<\delta^*\leq\delta_M$. 

To continue with \textbf{Step~\ref{step:build_tree}}, we compute a minimum spanning tree $T$ for the $n-k$ points in $\PP$. Let $e_1,\ldots, e_{n-k-1}$ be the edges of $T$ sorted by length such that $|e_1|\geq |e_2|\geq\cdots \geq |e_{n-k-1}|$. 
We do a parametric search among the values $|e_1|,|e_2|,\ldots, |e_{\min\{n-k-1,4k+1\}}|$ to update $\delta_m$ and $\delta_M$. 

Next, we remove all edges of $T$ that are at least as long as $\delta_M$. By Lemma~\ref{ell_bound}, 
if there are any remaining edges in $T$, $\delta_m$ is the length of the longest of the remaining edges. 
Let the remaining connected components of the tree $T$ be $\CC=\{C_1,C_2,\ldots, C_{\ell}\}$. 
Note that we removed exactly $\ell-1$ edges. If $\ell>4k+1$, we return $\delta^*=\delta_M$. 
This can be done because of Lemma~\ref{ell_bound}. The following lemma clearly holds.

\begin{lemma}
\label{lemma:Step1}
	For each $\delta\in (\delta_m,\delta_M)$, the algorithm for the problem $\dconnected$ on input $(\SSS,\PP,\delta)$ produces
	the same MST $T$ and the set of components $\CC$ in \textbf{DStep~\ref{step:build_tree}} as 
	were obtained after \textbf{Step~\ref{step:build_tree}}.
\end{lemma}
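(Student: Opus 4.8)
The plan is to show that every combinatorial choice made during \textbf{DStep~\ref{step:build_tree}} is locally constant on the open interval $(\delta_m,\delta_M)$, using the fact that the endpoints of this interval were carefully placed by the parametric search performed in \textbf{Step~\ref{step:build_tree}}. Recall that \textbf{DStep~\ref{step:build_tree}}, on input $(\SSS,\PP,\delta)$, does three things: it computes a MST $T$ of $\PP$, it deletes from $T$ the edges of length strictly greater than $\delta$, and it reads off the resulting components $\CC$ (returning \textsl{FALSE} if there are more than $4k+1$ of them). The first action does not depend on $\delta$ at all, so the MST $T$ produced is literally the same tree $T$ we fixed in \textbf{Step~\ref{step:build_tree}}; we only have to argue that the set of edges deleted is the same for every $\delta\in(\delta_m,\delta_M)$.

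The key observation is that the set of edges of $T$ deleted at threshold $\delta$ is exactly $\{e\in E(T)\mid |e|>\delta\}$, and this set changes only as $\delta$ crosses one of the edge lengths $|e_1|\ge |e_2|\ge\cdots\ge |e_{n-k-1}|$. By construction, \textbf{Step~\ref{step:build_tree}} runs a parametric search among the values $|e_1|,\dots,|e_{\min\{n-k-1,\,4k+1\}}|$, which guarantees that none of these values lies in the open interval $(\delta_m,\delta_M)$. For the remaining, shorter edge lengths $|e_i|$ with $i>4k+1$: either they too fall outside $(\delta_m,\delta_M)$, in which case we are done; or some $|e_i|\in(\delta_m,\delta_M)$ with $i>4k+1$, but then for every $\delta\in(\delta_m,\delta_M)$ the edges $e_1,\dots,e_{4k+1}$ are all strictly longer than $|e_i|\ge\delta$, so they are all deleted, forcing $\ell\ge 4k+2$, and by Lemma~\ref{ell_bound} this means $\delta<\delta^*$ for all such $\delta$, contradicting $\delta^*\le\delta_M$ — so this case cannot arise. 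Hence no edge length of $T$ lies strictly between $\delta_m$ and $\delta_M$, the deleted set $\{e\in E(T)\mid |e|>\delta\}$ is constant on $(\delta_m,\delta_M)$, and therefore the component family $\CC$ is constant there as well.

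Finally I would check that this constant value of $\CC$ matches the one recorded after \textbf{Step~\ref{step:build_tree}}: there we removed all edges of $T$ of length at least $\delta_M$, whereas \textbf{DStep~\ref{step:build_tree}} at a generic $\delta\in(\delta_m,\delta_M)$ removes those of length strictly greater than $\delta$. Since no edge length lies in $(\delta_m,\delta_M)$, an edge of $T$ has length $\ge\delta_M$ if and only if it has length $>\delta$, so the two removal rules delete the same edges and produce the same $\CC$. The ``$\ell>4k+1$, return'' branch is likewise consistent: if that branch is taken in \textbf{Step~\ref{step:build_tree}} we have already returned, and otherwise $\ell\le 4k+1$ for the common $\CC$, so \textbf{DStep~\ref{step:build_tree}} does not return either. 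I expect the only subtle point to be the argument handling the short edges $e_i$ with $i>4k+1$ — it is where one must invoke Lemma~\ref{ell_bound} together with $\delta^*\le\delta_M$ to rule out such an $|e_i|$ landing inside the interval; everything else is a direct unwinding of the definitions.
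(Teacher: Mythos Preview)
The paper offers no proof of this lemma; it simply asserts that it ``clearly holds'' immediately after describing \textbf{Step~\ref{step:build_tree}}. Your argument is therefore considerably more detailed than the paper's, and its overall structure (the MST is $\delta$-independent; no edge length of $T$ lies in $(\delta_m,\delta_M)$; hence the deleted set and $\CC$ are constant there and match what \textbf{Step~\ref{step:build_tree}} recorded) is correct.

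There is one logical slip in your treatment of the ``short'' edges $e_i$ with $i>4k+1$. You write that for every $\delta\in(\delta_m,\delta_M)$ the edges $e_1,\dots,e_{4k+1}$ are ``strictly longer than $|e_i|\ge\delta$'', but $|e_i|\ge\delta$ fails once $\delta>|e_i|$, and ``$\delta<\delta^*$ for all such $\delta$'' does not contradict $\delta^*\le\delta_M$ --- it only forces $\delta^*=\delta_M$. The clean fix is the one the paper implicitly relies on via its sentence ``if there are any remaining edges in $T$, $\delta_m$ is the length of the longest of the remaining edges'': from $|e_i|>\delta_m$ and $|e_{4k+1}|\ge|e_i|$ you get $|e_{4k+1}|>\delta_m$; since the parametric search guarantees $|e_{4k+1}|\notin(\delta_m,\delta_M)$, this forces $|e_{4k+1}|\ge\delta_M$. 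But then \textbf{Step~\ref{step:build_tree}}, which removes edges of length $\ge\delta_M$, removes all of $e_1,\dots,e_{4k+1}$, so $\ell\ge 4k+2$ and the early return is triggered. Hence, under the standing assumption that \textbf{Step~\ref{step:build_tree}} did \emph{not} return (which is when the lemma is used), no such $|e_i|$ exists. Your final paragraph already isolates the early-return branch correctly, so this is exactly where the contradiction should land.
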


In \textbf{Step~\ref{step:build_tree}} we used the algorithm for the decision problem \dconnected $\OO(\log k)$ times. 
Hence, \textbf{Step~\ref{step:build_tree}} runs in time $\OO(n\log n) + k^{\OO(k)}n\log k = k^{\OO(k)}n\log n$.

\paragraph{Step~\ref{step:Voronoi}.}
As in \textbf{DStep~\ref{step:Voronoi}}, at the end of this step, we would like, for each component $C_i\in\CC$ and for each line segment $s_j\in\SSS$, to have the Voronoi diagram on $s_j$ of the points in $C_i$. Note that this was already computed during the preprocessing of Theorem~\ref{decision_theorem}, this means before  \textbf{Step~\ref{step:build_tree}}. Hence, we do nothing on this ``step''.


\paragraph{Loop.} We treat each line segment from $\SSS$ and each component from $\CC$ as an abstract vertex and 
we iterate over all topology trees $\tau$ on these $k+\ell$ vertices.

\paragraph{Step~\ref{step:inside_loop}.} We simulate \textbf{DStep~\ref{step:inside_loop}} while doing parametric search. 
Given a topology tree $\tau$, we iterate over all of its significant topological subtrees. 
We restrict our attention to one fixed significant topological subtree $\tau'$.
Let the set of vertices of $\tau'$ be $V'\subseteq \SSS\cup\CC$. 
We denote $\SSS'=\SSS\cap V'$ and $\CC'= \CC\cap V'$. 
By definition of $\tau'$, we know that $\SSS'$ is not empty. 
Let us choose a root $s_r\in\SSS'$ of $\tau'$. For each node $s_i\in\SSS'$ of $\tau'$, 
let $\tau'(s_i)$ be the subtree of $\tau'$ rooted at $s_i$,
and let its height $h(s_i)\in\NN$ be the number of edges on a longest path in $\tau'$ that begins in $s_i$ 
and is contained in $\tau'(s_i)$. Note that each such a longest path must end in a leaf of $\tau'$.

As in \textbf{DStep~\ref{step:inside_loop}}, we use dynamic programming bottom-up along $\tau'$.
For each segment node $s_i\in\SSS'$ of $\tau'$, we compute $X_i(\delta)$, as defined in \textbf{DStep~\ref{step:inside_loop}},
but taking $\delta$ as a parameter that takes values inside the interval $(\delta_m,\delta_M)$. 
It will be convenient to use that $X_i(\delta)$ increases with $\delta$: whenever $\delta'<\delta$,
we have $X_i(\delta')\subseteq X_i(\delta)$.

If we have a leaf $s_i$ from $\SSS'$, then we have $X_i(\delta)=s_i$. 
Consider now a segment node $s_i\in\SSS'$ of $\tau'$.
As in Lemma~\ref{le:internal}, we may reindex the nodes, if needed, and assume that the children of $s_i$ 
in $\tau'$ are $s_1,\ldots,s_t$ and $C_1,\ldots, C_u$. 
Then, by Lemma~\ref{le:internal} we have
\[ 
	X_i(\delta)=\bigcap_{\ell=1}^{t} \big(X_{\ell}(\delta)\oplus D(0,\delta)\big)\bigcap_{\ell=1}^{u} \big(C_\ell\oplus D(0,\delta)\big)
		\bigcap s_i.
\]
We will use parametric search to determine the size of the segmentation $X_i(\delta)$ and 
we will represent its components as at most $h(s_i)$-square root functions of $\delta$, 
as defined in Section~\ref{sec:root_functions}. 
Because we process the tree $\tau'$ bottom-up, we can assume that the sizes of segmentations 
$X_\ell(\delta)$, for $\ell \in[t]$, are already fixed in the interval $\delta\in(\delta_m,\delta_M)$ 
and that their components are at most $(h(s_i)-1)$-square root functions of $\delta$.

\paragraph{Computing an intersection of a $\delta$-neighborhood of a segmentation with a line segment.} 
Let us first describe how we can compute each of the $t$ operations 
\[ 
	Y_\ell(\delta) = \big(X_\ell(\delta)\oplus D(0,\delta)\big)\cap s_i.
\] 
We will often leave out in the notation the dependency on $\ell$ and $i$. 
We will closely follow the algorithm~\ref{geom:segmentationsegment}) from Section~\ref{sec:geometry}.

Let the line segment $s_i$ be $s_i=(p_s, e_s, a_s, b_s)$. 
Let the segmentation $X_\ell$ be $X_\ell(\delta)=(p_X, e_X, a_1(\delta), b_1(\delta), \ldots , a_N(\delta), b_N(\delta))$, 
where $a_j(\delta)$ and $b_j(\delta)$, for $j\in[N]$ are $(h(s_i)-1)$-square root functions of $\delta$. 
For $j\in [N]$, let $\sigma_j(\delta)$ be the segment $\sigma_j(\delta)=(p_X,e_X, a_j(\delta), b_j(\delta))$,
let $\gamma_j(\delta)$ be the boundary of $\sigma_j(\delta)\oplus D(0,\delta)$,
and let $\eta_j(\delta)$ be the intersection of $s_i$ with $\sigma_j(\delta)\oplus D(0,\delta)$.
Recall Figure~\ref{fig:segmentationsegment}.

We first narrow the interval defined by $\delta_m<\delta_M$ in such a way that, for each single $j\in [N]$,
the intersection $\eta_j(\delta)$ is empty for all $\delta$ in the interval $(\delta_m, \delta_M)$ or nonempty for all $\delta$ in the interval $(\delta_m, \delta_M)$.
For each $j\in [N]$, we compute the value of $\delta_j$ such that $\eta_j(\delta_j)$ is non-empty
but $\eta_j(\delta)$ is empty for all $\delta < \delta_j$.
Because $X_\ell(\delta)$ increases with $\delta$, there is at most one single $\delta_j$ that may satisfy this condition.
It may be that $\delta_j$ does not exist because $\eta_j(\delta)$ is always non-empty; in this case
we set $\delta_j=\delta_m$.
Each such value $\delta_j$ is a solution to some equation involving the segment $s_i$ and a circle of radius $\delta$
centered at $a_j(\delta)$ or $b_j(\delta)$, or lines parallel to $e_X$ at distance $\delta$ from $\sigma_j$.
Because of Lemmas~\ref{le:sqrt_alternative} and~\ref{le:sqrt_solutions}, the value $\delta_j$ is a root
of a polynomial in $\delta$ of degree at most $4^{h(s_i)}\le 4^k$.
We then do a parametric search among the values $\{ \delta_1,\dots,\delta_N\}$ to update $\delta_m$ and $\delta_M$.
We can then assume that, for each $j\in [N]$, the segment $\eta_j(\delta)$ is empty for all $\delta$
with $\delta_m < \delta < \delta_M$ or non-empty for all $\delta$
with $\delta_m < \delta < \delta_M$.

For each $j\in [N]$, we have the (possibly empty) segment $\eta_j(\delta)=(p_s,e_s,a'_j(\delta),b'_j(\delta))$. All these segments have the same reference point $p_s$ and vector $e_s$. The functions $a'_j(\delta)$ and $b'_j(\delta)$ are at most $h(s_i)$-square root functions because of Lemma~\ref{le:sqrt_alternative}. To merge
the non-empty segments that are overlapping, we have to sort the values $a'_j(\delta), b'_j(\delta),a'_{j+1}(\delta), b'_{j+1}(\delta)$, for each single $j\in [N-1]$. For this,
we perform a step of parametric search among the solutions of the equations $a'_j(\delta) = a'_{j+1}(\delta)$, $a'_j(\delta) = b'_{j+1}(\delta)$, $b'_j(\delta) = a'_{j+1}(\delta)$ and  $b'_j(\delta) = b'_{j+1}(\delta)$, for all $j\in[N-1]$.
Because of Lemma~\ref{le:sqrt_solutions}, these solutions are roots of polynomials of degree at most $4^k$.

To summarize, spending $k^{\OO(k)}N \log N$ time to manipulate segments, polynomials of degree at most $4^k$ and their roots, 
and performing $O(\log N)$ calls to the decision problem, we have an interval $(\delta_m,\delta_M)$
where $\big(X_\ell(\delta)\oplus D(0,\delta)\big)\cap s_i$ is described by the same combinatorial structure.
In particular, it is described by a segmentation 
\[
   Y_\ell(\delta) ~=~
		(p_s,e_s,\tilde a_1(\delta),\tilde b_1(\delta),\ldots ,\tilde a_{N'}(\delta),\tilde b_{N'}(\delta))
   ~~~~\text{ for all $\delta\in (\delta_m,\delta_M)$}.
\]
Note that $N'$ depends on $\delta$ and $\ell$, but it is constant for all $\delta\in (\delta_m,\delta_M)$.

We perform this procedure for each $\ell\in [t]$, where $t\leq 5$.
If, for some $\delta\in(\delta_m,\delta_M)$ and hence for all $\delta\in(\delta_m,\delta_M)$, 
we get that $\big(X_\ell(\delta)\oplus D(0,\delta)\big)\cap s_i$ is empty,
then we know that the topology tree $\tau$ under consideration is not $\delta$-realizable for any $\delta<\delta_M$
and therefore we can move on to the next topology tree $\tau$ inside \textbf{Loop}.

\paragraph{Computing an intersection of a $\delta$-neighborhood of a set of points with a line segment.} 
Next, we describe each of the $u$ operations $Z_\ell(\delta) = \big(C_\ell\oplus D(0,\delta)\big)\cap s_i$. 
We will often leave out in the notation the dependency on $\ell$ and $i$. 
Let $(a_1, J_1),\ldots, (a_{N}, J_N)$ be a Voronoi diagram on $s_i$ for points from $C_\ell$. 
Then 
\[
	Z_\ell(\delta)=\bigcup_{j=1}^{N}\big(D(a_j,\delta)\cap J_j\big).
\]
This implies that when $\delta$ goes from $\delta_m$ to $\delta_M$, the set
$Z_\ell(\delta)$ goes through $\OO(N)$ combinatorial changes. 
This is because for each $j\in[N]$, $D(a_j,\delta)\cap J_j$ goes through at most 3 combinatorial changes: when $D(a_j,\delta)\cap J_j\neq\emptyset$ for the first time,  when one endpoint of $J_j$ is included 
in $D(a_j,\delta)$ and when both endpoints of $J_j$ are included in $D(a_j,\delta)$.
See Figure~\ref{fig:Voronoi3}.

\begin{figure}[tb]
	\centering
	\includegraphics[width=\textwidth,page=9]{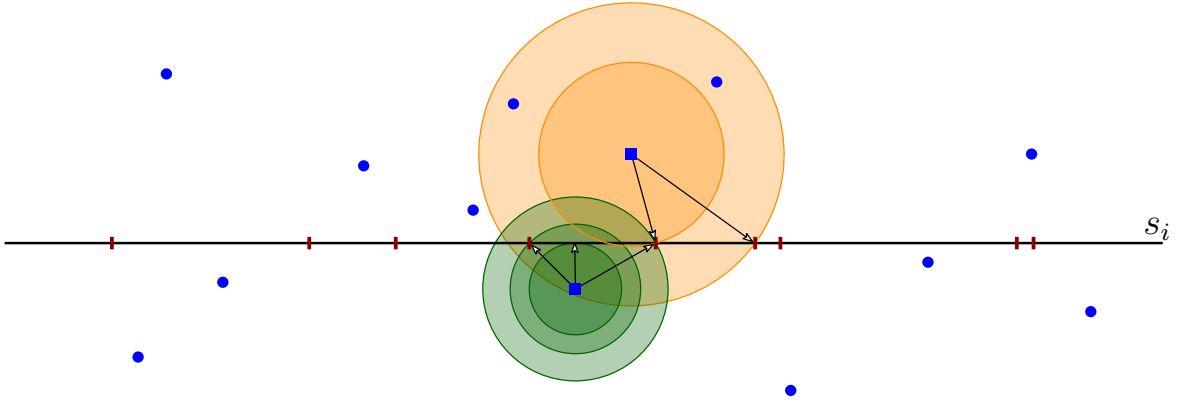}
	\caption{Example showing for two different points the moments when 
		$D(a_j,\delta)\cap J_j$ combinatorially change. In one case (orange) there are two
		changes, in the other case (green) there are three changes. The value of $\delta$
		corresponds to the radius of the disk.
		The Voronoi diagram is from Figure~\ref{fig:Voronoi}.}
	\label{fig:Voronoi3}
\end{figure}

We compute in time $\OO(N)$ all $\OO(N)$ values of $\delta$ for which these combinatorial changes occur and we do a parametric search among them to update $\delta_m$ and $\delta_M$. After that, for any $\delta\in(\delta_m,\delta_M)$, the segmentation $Z_\ell(\delta)$ has fixed size and its components are at most $1$-square root functions of $\delta$.

We perform this procedure for each $\ell\in [u]$. 
Using that $|C_\ell| \le n$ for all $\ell\in [u]$ and that $u\le 5$, we spend a total of 
$\OO(n \log n)$ time plus $\OO(\log n)$ calls to the decision problem.
If, for some $\delta\in(\delta_m,\delta_M)$ and hence for all $\delta\in(\delta_m,\delta_M)$, 
we get that for some $\ell\in [u]$ the segmentation
$Z_\ell(\delta)$ is empty,
then we know that the topology tree $\tau$ under consideration is not $\delta$-realizable for any $\delta<\delta_M$
and therefore we can move on to the next topology tree $\tau$ inside \textbf{Loop}.

\paragraph{Computing intersections of segmentations.} Let us describe how we can compute the intersections of segmentations needed to finish the computation of $X_i(\delta)$. 
As in the statement of Lemma~\ref{le:internal}, let $t$ and $u$ denote the number of children of each type for $s_i$ in $\tau'$. 
At this point we have:
\begin{itemize}
\item two values $\delta_m<\delta_M$;
\item segmentations $Y_\ell(\delta)= \big(X_\ell(\delta)\oplus D(0,\delta)\big)\cap s_i$, for all $\ell\in [t]$; and 
\item segmentations $Z_\ell(\delta)= \big(C_\ell\oplus D(0,\delta)\big)\cap s_i$, for all $\ell\in [u]$;
\end{itemize}
such that for all $\delta\in (\delta_m,\delta_M)$ each segmentation has immutable size (number of segments)
and each value describing any part of any segmentation is an $h(s_i)$-square root function.

We have to compute the intersection of these $u+t$ segmentations on $s_i$. Recall that $u+t$
is bounded by $5$ because it is the degree of $s_i$ in $\tau'$.
We do this by pairs, which means that we have to compute $t+u-1\le 4$ intersections
of pairs of segmentations. We describe how to perform the merge of two segmentations.

Consider two of the segmentations that may appear through the process:
\begin{align*}
	X(\delta) &= (p_s,e_s, a_1(\delta), b_1(\delta),\ldots , a_{N}(\delta), b_{N}(\delta)) ~~~\text{of size $N$}\\
	X'(\delta)&= (p_s,e_s, a'_1(\delta), b'_1(\delta),\ldots , a'_{N'}(\delta), b'_{N'}(\delta)) ~~~\text{of size $N'$}
\end{align*}
We want to compute $X(\delta)\cap X'(\delta)$.
For this, it suffices to sort the values 
\begin{align*}
	&a_1(\delta) \le b_1(\delta) < a_2(\delta)\le b_2(\delta)< \ldots < a_{N}(\delta)\le b_{N}(\delta), ~~\text{ and }\\
	&a'_1(\delta) \le b'_1(\delta) < a'_2(\delta)\le b'_2(\delta)< \ldots < a'_{N'}(\delta)\le b'_{N'}(\delta)
\end{align*}
for any $\delta\in (\delta_m,\delta_M)$.
After sorting the endpoints, we can easily compute the intersection $X(\delta)\cap X'(\delta)$ in $O(N+N')$ time.
Since we are merging two lists that are sorted, we can use Cole's technique~\cite{Cole87} for
parametric search on networks applied to the bitonic sorting network~\cite[Section 4.4]{JaJa-book}. 
This gives a running time of $\OO((N+N') \log (N+N'))$ plus $O(\log (N+N'))$
calls to the decision problem. Using that $N+N'=\OO(n)$, we get a running time of 
$\OO(n \log n) + k^{\OO(k)} n \log n$ for the intersection of two segmentations. 

Since Cole's technique is complex,
we provide an alternative, simpler way of achieving the same time bound to compute the intersection of two segmentations and that uses properties of our setting. The key insight is that the
segmentations we are considering do not decrease with $\delta$ in the following sense:
if $0\leq \delta_1 < \delta_2$, then $X(\delta_1)\subseteq X(\delta_2)$
and $X'(\delta_1)\subseteq X'(\delta_2)$. This follows from the definition of $X_i(\delta)$.

This implies that the functions $a_j(\delta)$, for $j\in [N]$, 
and the functions $a'_j(\delta)$, for $j\in[N']$, are (not necessarily strictly) decreasing on the interval $(\delta_m,\delta_M)$, while the functions $b_j(\delta)$, for $j\in[N]$, and the functions $b_j(\delta)$, for $j\in[N']$ are (not necessarily strictly) increasing on the interval $(\delta_m,\delta_M)$. 
Moreover, all these functions are $h(s_i)$-square root functions, defined (at least) on the interval $(\delta_m,\delta_M)$. By continuity, they are also defined on the interval $(\delta_m,\delta_M]$.

\begin{lemma}
	There are at most $4^k \OO(N+N')$ values of $\delta$ in the interval $(\delta_m,\delta_M)$
	where the boundary of some segment in $X(\delta)$ may intersect with a boundary of some segment in $X'(\delta)$. We are only considering such pairs of boundaries, where not both boundaries are constant on the interval $(\delta_m,\delta_M)$.
	These values can be computed in $2^{\OO(k)} (N+N')+ \OO((N+N') \log (N+N'))$ time.
\end{lemma}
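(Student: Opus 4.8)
The plan is to use the monotonicity recorded just above --- on $(\delta_m,\delta_M)$ the left boundaries of $X(\delta)$ and of $X'(\delta)$ are non-increasing and the right boundaries are non-decreasing --- to show that only $\OO(N+N')$ pairs of boundaries can ever meet, and then to invoke Lemma~\ref{le:sqrt_solutions} once per such pair.

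First I would attach to each of the $2N$ boundary functions of $X(\delta)$, namely $a_1(\delta),b_1(\delta),\dots,a_N(\delta),b_N(\delta)$, and to each of the $2N'$ boundary functions of $X'(\delta)$, its \emph{range}: the interval of positions it visits as $\delta$ runs over $(\delta_m,\delta_M)$. Since each such function is monotone and continuous there (being an $h(s_i)$-square root function with $h(s_i)\le k$), its range is an interval whose endpoints are its limits at $\delta_m^+$ and at $\delta_M$; by continuity at $\delta_M$ and monotone boundedness at $\delta_m$ these limits exist and can be evaluated in $2^{\OO(k)}$ time per boundary. The crucial structural point is that, because $X(\delta)$ keeps size $N$ throughout the open interval, we have $b_j(\delta)<a_{j+1}(\delta)$ for all $\delta\in(\delta_m,\delta_M)$; combined with the monotonicity directions this gives, for every $j$, that the right endpoint $a_j(\delta_m^+)$ of the range of $a_j$ is $\le b_j(\delta_m^+)$, the left endpoint of the range of $b_j$, and that the right endpoint $b_j(\delta_M)$ of the range of $b_j$ is $\le a_{j+1}(\delta_M)\le a_{j+1}(\delta_m^+)$, the right endpoint of the range of $a_{j+1}$. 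Hence the ranges of the boundaries of $X$, listed in the order $a_1,b_1,a_2,b_2,\dots$, are sorted left-to-right along the line with pairwise-disjoint interiors; the same holds for $X'$.

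Next, a boundary $c$ of $X$ and a boundary $c'$ of $X'$ can satisfy $c(\delta)=c'(\delta)$ for some $\delta\in(\delta_m,\delta_M)$ only if their ranges overlap. Since the ranges of the $X$-boundaries form a sorted list of $2N$ pairwise-interior-disjoint intervals and those of the $X'$-boundaries a sorted list of $2N'$ such intervals, the number of overlapping pairs --- hence the number of candidate pairs $(c,c')$ --- is $\OO(N+N')$, and they are all found by one simultaneous scan of the two sorted lists in $\OO((N+N')\log(N+N'))$ time (indeed linear, since the lists come already sorted). For each candidate pair that is not a pair of two constant functions --- the excluded case being exactly two copies of the endpoints $a_s,b_s$ of $s_i$ --- Lemma~\ref{le:sqrt_solutions} applies to the two $h(s_i)$-square root functions $c$ and $c'$: every $\delta$ with $c(\delta)=c'(\delta)$ is a root of a polynomial in $\delta$ of degree at most $4^{h(s_i)}\le 4^k$, and this polynomial together with its roots is computed in $2^{\OO(k)}$ time. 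Gathering the roots over all $\OO(N+N')$ candidate pairs produces a set of at most $4^k\cdot\OO(N+N')$ values that contains every $\delta$ in $(\delta_m,\delta_M)$ at which a boundary of some segment of $X(\delta)$ meets a boundary of some segment of $X'(\delta)$, and the total cost is $2^{\OO(k)}(N+N')+\OO((N+N')\log(N+N'))$, as claimed.

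The part I expect to be most delicate is the claim that the boundary ranges form a sorted, pairwise-interior-disjoint sequence: one must derive it cleanly from the fixed-size hypothesis (which is precisely what forbids two consecutive segments of $X$ from touching inside the open interval) together with the monotonicity directions, and one must dispose of the corner cases --- a segment degenerating to a point, a boundary coinciding with an endpoint of $s_i$, and two boundary functions that happen to agree on an entire subinterval (in which case that pair simply contributes no new critical value). Once this combinatorial skeleton is fixed, the overlap-counting and the per-pair application of Lemma~\ref{le:sqrt_solutions} are routine.
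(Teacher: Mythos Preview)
Your proposal is correct and rests on the same monotonicity insight as the paper, but you reach the $\OO(N+N')$ bound by a slightly different route. The paper observes that if a boundary of $\sigma_i(\delta)$ ever meets a boundary of $\sigma'_j(\delta)$, then the full segments $\sigma_i(\delta_M)$ and $\sigma'_j(\delta_M)$ must intersect (since both segments only grow as $\delta$ increases); because $\sigma_1(\delta_M),\dots,\sigma_N(\delta_M)$ are pairwise interior-disjoint and likewise for the primed family, the number of intersecting segment pairs at the single snapshot $\delta=\delta_M$ is $\OO(N+N')$, and each such pair contributes four boundary equations. You instead assign to every boundary function its \emph{range} over the whole interval, argue that these ranges form a sorted interior-disjoint list, and count overlapping range pairs between the two lists. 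Both arguments exploit exactly the same structural fact (monotone growth yields a sorted family of disjoint intervals), and both feed the resulting $\OO(N+N')$ pairs into Lemma~\ref{le:sqrt_solutions} in the same way.

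The paper's version is a bit cleaner: it needs only one evaluation point, $\delta_M$, where continuity is already established, and avoids your mild technicality of evaluating limits at $\delta_m^+$ (where the $h$-square-root expressions are not a priori guaranteed to be defined, only bounded and monotone). Your version, on the other hand, works directly at the level of boundary functions rather than segment pairs, which is arguably more natural for the subsequent per-pair root computation. In either case the running-time accounting is identical.
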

\begin{proof}
	Let $\sigma_1(\delta),\dots, \sigma_N(\delta)$ be the segments in $X(\delta)$;
	let $\sigma'_1(\delta),\dots, \sigma'_{N'}(\delta)$ be the segments in $X'(\delta)$.
	If for some $\delta$ the boundary of some segments $\sigma_i(\delta)$ and $\sigma'_j(\delta)$ intersect,
	then, because the segments are monotonely increasing, $\sigma_i(\delta_M)$ and $\sigma'_j(\delta_M)$
	intersect.  Here we are only inserting $\delta=\delta_M$ into the boundaries of segments $\sigma_i$ and $\sigma'_j$ and we are not considering a possible combinatorial change of $X(\delta)$ or $X'(\delta)$ at $\delta=\delta_M$. This is because we are only interested in limits when $\delta\in(\delta_m,\delta_M)$ approaches $\delta_M$.
	Because $\sigma_1(\delta_M),\dots, \sigma_N(\delta_M)$ are pairwise interior disjoint,
	and $\sigma'_1(\delta_M),\dots, \sigma'_{N'}(\delta_M)$ are pairwise interior disjoint,
	there may be at most $\OO(N+N')$ pairs of indices 
	\[ 
		\Pi = \{ (i,j)\in [N]\times [N'] \mid \text{$\sigma_i(\delta_M)$ and $\sigma'_j(\delta_M)$ intersect}\}.
	\]
	Therefore, it suffices to compute those pairs $\Pi$ and, 
	for each $(i,j)\in \Pi$ consider the 4 equations $c_i(\delta)=c'_j(\delta)$ with $c_i\in \{a_i,b_i\}$
	and $c'_j\in \{a'_i,b'_i\}$. The solutions to those equations
	are roots of a polynomial of degree at most $4^{h(s_i)}$ because of Lemma~\ref{le:sqrt_solutions}.
	
	The computation of $\Pi$ takes $\OO((N+N') \log (N+N'))$ time, and then we have to compute the 
	roots of the resulting $\OO(N+N')$ polynomials of degree $4^{h(s_i)}$.	
\end{proof}

Using the lemma, we compute in $2^{\OO(k)} (N+N')+ \OO((N+N') \log (N+N')) = 2^{\OO(k)} n \log n$ time
the $4^k \OO(N+N')= 2^{\OO(k)} n$ values of $\delta$ where the boundaries of the segments may intersect
and do a parametric search among them to update $\delta_m$ and $\delta_M$. After that, for any $\delta\in(\delta_m,\delta_M)$, the endpoints of the segments in $X(\delta)$ and $X'(\delta)$ 
are sorted in the same way, and we can easily compute $X(\delta)\cap X'(\delta)$.
Note that the endpoints of the resulting segmentation $X(\delta)\cap X'(\delta)$ keep being
described by $h(s_i)$-square root functions because for each endpoint there was an endpoint
in $X(\delta)$ or $X'(\delta)$.

We repeat $t+u-1\le 4$ times the computation of intersection of two segmentations on $s_i$, until we obtain
\[ 
	X_i(\delta)=\bigcap_{\ell=1}^{t} Y_{\ell}(\delta) \bigcap_{\ell=1}^{u} Z_\ell(\delta) = 
		\bigcap_{\ell=1}^{t} \big(X_{\ell}(\delta)\oplus D(0,\delta)\big)\bigcap_{\ell=1}^{u} \big(C_\ell\oplus D(0,\delta)\big)
		\bigcap s_i.
\]
Altogether, we used $k^{\OO(k)}n\log n$ steps.

\paragraph{Summary of \textbf{Step~\ref{step:inside_loop}}.} 
We perform the computation of $X_i(\delta)$ bottom-up along the significant topology tree $\tau'$.
For each node $s_i\in \SSS'$ of $\tau'$ we spend $k^{\OO(k)}n\log n$ time. At end of processing the significant topology tree $\tau'$, we have computed in $k^{\OO(k)}n\log n$ time values $\delta_m<\delta_M$ such that the set $X_r(\delta)$ is either empty, for all $\delta_m<\delta<\delta_M$, or non-empty, for all $\delta_m<\delta<\delta_M$. This is because there are no combinatorial changes for $\delta\in(\delta_m,\delta_M)$. After we compute the set $X_r(\delta)$ for each significant topology subtree of $\tau$, we know that at least one of these sets is empty. If all of the sets $X_r$ were non-empty, 
then $\delta^*$ should be at most $\delta_m$ by continuity, which cannot be the case.

We repeat \textbf{Step~\ref{step:inside_loop}} for each topology tree $\tau$.
After \textbf{Loop} finishes, we return $\delta^*=\delta_M$.
Since there are $k^{\OO(k)}$ different topology trees to consider, and for each topology
tree we spend $k^{\OO(k)} n\log n$ time, the algorithm takes
$k^{\OO(k)} k^{\OO(k)} n\log n = k^{\OO(k)} n\log n$ time in total.

\begin{theorem}
\label{parametric_theorem}
	The optimization problem \connectivity for $k$ line segments and $n-k$ points can be solved
	performing $k^{\OO(k)}n \log n $ operations. 
	Here, an operation may include a number that has a computation tree of depth $\OO(k)$ whose
	internal nodes are additions, subtractions, multiplications, divisions or square root computations
	and whose leaves contain input numbers and a root of a polynomial of degree at most $4^k$
	with coefficients that are obtained from the input numbers using $2^{\OO(k)}$ 
	multiplications, additions and subtractions.
\end{theorem}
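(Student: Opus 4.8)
The plan is to run the decision algorithm of Theorem~\ref{decision_theorem} generically on the unknown optimum $\delta^*$, maintaining throughout an open interval $(\delta_m,\delta_M)$ that is guaranteed to contain $\delta^*$ and on which every combinatorial branch taken by the decision algorithm is constant; whenever a branch of the generic algorithm depends on the sign of some quantity at $\delta^*$, we resolve it with a round of parametric search. Concretely, each round collects the $N$ candidate values of $\delta$ at which the relevant combinatorial structure could change, sorts them, and binary-searches for two consecutive ones $\delta'_1<\delta'_2$ with $\dconnected(\SSS,\PP,\delta'_1)=\textsl{FALSE}$ and $\dconnected(\SSS,\PP,\delta'_2)=\textsl{TRUE}$; then we set $\delta_m\leftarrow\max\{\delta_m,\delta'_1\}$ and $\delta_M\leftarrow\min\{\delta_M,\delta'_2\}$. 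By the preprocessing of Theorem~\ref{decision_theorem} each such round costs $\OO(N\log N)+k^{\OO(k)}n\log N$.

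I would then mirror the four parts of the decision algorithm. In \textbf{Step~\ref{step:build_tree}} we set $\delta_M$ from an arbitrary feasible placement, check $\delta=0$ for a possible early return, compute the MST $T$ of $\PP$, and run parametric search among the $\OO(k)$ longest edge lengths of $T$; Lemma~\ref{ell_bound} then guarantees that the resulting $\CC$ has at most $4k+1$ components (else we return $\delta_M$), and Lemma~\ref{lemma:Step1} says the generic decision algorithm produces exactly this $T$ and $\CC$ throughout $(\delta_m,\delta_M)$. \textbf{Step~\ref{step:Voronoi}} is vacuous because all $\OO(k^2)$ Voronoi diagrams were precomputed. In the \textbf{Loop} we enumerate the $k^{\OO(k)}$ topology trees (Lemma~\ref{number_of_topologies_k}), and for each one we run \textbf{Step~\ref{step:inside_loop}}.

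For \textbf{Step~\ref{step:inside_loop}} on a fixed topology tree $\tau$, I would process each significant topology subtree $\tau'$ bottom-up (valid by Lemma~\ref{significant}), and for each segment node $s_i$ maintain the segmentation $X_i(\delta)$ with every endpoint written as an $h(s_i)$-square root function of $\delta$; the recursion is given by Lemma~\ref{le:internal}. Evaluating it requires three primitives, each handled by a parametric-search round that also fixes the combinatorial type of the output on a narrowed $(\delta_m,\delta_M)$: (i) $Y_\ell(\delta)=(X_\ell(\delta)\oplus D(0,\delta))\cap s_i$, where the critical $\delta$-values are roots of degree-$4^k$ polynomials by Lemmas~\ref{le:sqrt_alternative} and \ref{le:sqrt_solutions} (first fix which $\eta_j$ are nonempty, then sort consecutive endpoints to merge); (ii) $Z_\ell(\delta)=(C_\ell\oplus D(0,\delta))\cap s_i$, where using the precomputed Voronoi diagram on $s_i$ there are only $\OO(|C_\ell|)$ combinatorial events, each a simple algebraic number; and (iii) the $t+u-1\le 4$ pairwise intersections of the resulting segmentations on $s_i$. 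For (iii) the obstacle is avoiding a $\log^2$ factor in sorting $\Theta(n)$ endpoints parametrically: one option is Cole's technique on a bitonic sorting network, but I would prefer the self-contained argument that exploits monotonicity — since $X(\delta)$ is non-decreasing in $\delta$, left endpoints are non-increasing and right endpoints non-decreasing, so two segments' boundaries can cross in $(\delta_m,\delta_M)$ only if the segments already overlap at $\delta_M$; this limits the candidate crossing values to $\OO(N+N')$, computable in $2^{\OO(k)}(N+N')+\OO((N+N')\log(N+N'))$ time, after which one parametric-search round fixes the sorted order. Throughout, one must check that every newly created endpoint is again an $h(s_i)$-square root function (it equals an old endpoint, or comes from intersecting a circle/line with $s_i$, so Lemma~\ref{le:sqrt_alternative} applies), and that emptiness of $X_r(\delta)$ is decided uniformly on the interval, letting us move to the next $\tau$; with no crossing or emptiness event strictly inside, the answer on the whole interval equals the limit at $\delta_M$.

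Finally I would collect the cost: each of the three primitives costs $k^{\OO(k)}n\log n$, each $\tau'$ has $\OO(k)$ nodes of height $\OO(k)$, each $\tau$ has at most $k$ significant subtrees, and there are $k^{\OO(k)}$ topology trees, giving $k^{\OO(k)}\cdot k^{\OO(k)}n\log n=k^{\OO(k)}n\log n$ in total; the preprocessing $\OO(k^2n\log n)$ is absorbed. The statement about computation trees follows because every endpoint of every $X_i(\delta)$ is obtained by $\OO(1)$ arithmetic operations and square roots from endpoints one level down in a tree of depth $\OO(k)$, and because the leaves may now also be a fixed root of a single degree-$4^k$ polynomial whose coefficients (by Lemma~\ref{le:sqrt_solutions}) come from the input via $2^{\OO(k)}$ additions, subtractions and multiplications. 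The bulk of the work — and where I expect the main difficulty — is in primitives (i) and (iii): verifying that the parametric-search rounds truly fix all combinatorial choices, bounding the algebraic degree of every critical value uniformly by $4^k$, and pushing through the monotonicity-based shortcut in (iii) so that the $n\log n$ (rather than $n\log^2 n$) bound holds.
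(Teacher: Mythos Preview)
Your proposal is correct and follows essentially the same approach as the paper: the same parametric-search shell over the decision algorithm, the same handling of Steps~\ref{step:build_tree}--\ref{step:inside_loop}, the same three primitives at each node $s_i$ with critical $\delta$-values bounded via Lemmas~\ref{le:sqrt_alternative} and~\ref{le:sqrt_solutions}, and the same monotonicity-based shortcut (segments that cross in $(\delta_m,\delta_M)$ must overlap at $\delta_M$) as an alternative to Cole's technique for intersecting two segmentations. The final accounting and the computation-tree discussion also match the paper's.
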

\begin{proof}
	The correctness of the algorithm was argued as the algorithm was described. 
    It remains to discuss the depth of computation tree of the numbers being computed through the algorithm.
	The depths of the computation tree of numbers used in the preprocessing, \textbf{Step~\ref{step:build_tree}} 
	and \textbf{Step~\ref{step:Voronoi}} are $\OO(1)$. 
	Numbers in each iteration of \textbf{Loop} are computed independently of the numbers 
	computed in another iteration. 
	The depth of computation trees of numbers used in \textbf{Step~\ref{step:inside_loop}} is $\OO(k)$,
	but in the calls to the decision problem we are using a root of a polynomial of degree $4^k$.
	Therefore, we are using Theorem~\ref{decision_theorem} with an input number
	that is a root of a polynomial of degree $4^k$ that is computed by using Lemma~\ref{le:sqrt_solutions}.
	The result follows.
\end{proof}

Without diving into the time needed for the algebraic operations performed by the algorithm
and trying to optimize them, we obtain the following.

\begin{corollary}
	The optimization problem \connectivity for $k$ line segments and $n-k$ points can be solved
	in $f(k)n \log n $ time for some computable function $f(\cdot)$.
\end{corollary}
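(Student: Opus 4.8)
The plan is to obtain this corollary as an immediate consequence of Theorem~\ref{parametric_theorem} by absorbing the cost of exact algebraic arithmetic into the function $f(\cdot)$. First I would recall what Theorem~\ref{parametric_theorem} already gives: the optimization algorithm performs $k^{\OO(k)} n \log n$ operations, and each such operation manipulates a number whose computation tree has depth $\OO(k)$, with internal nodes that are additions, subtractions, multiplications, divisions and square-root computations, and with leaves that are input numbers together with a single root of a polynomial of degree at most $4^k$ whose coefficients are themselves obtained from the input by $2^{\OO(k)}$ additions, subtractions and multiplications. In particular, the whole run of the algorithm touches at most $k^{\OO(k)} n \log n$ numbers, each described by such a bounded-depth computation DAG over the input numbers, and the only non-arithmetic manipulations of these numbers are the comparisons needed by sorting, by the parametric-search steps, and by the decision procedure of Theorem~\ref{decision_theorem}.

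Second, I would invoke the discussion of Section~\ref{sec:preliminary}. Every number arising through the algorithm has algebraic degree $2^{\OO(k)}$ over the input numbers: arithmetic and square roots at most square the algebraic degree at each of the $\OO(k)$ levels of the computation tree, and the single polynomial root that may sit at a leaf has degree $4^k$ with coefficients of bounded algebraic degree, so it too has algebraic degree $2^{\OO(k)}$; moreover a whole computation tree lives inside one field extension of the input field of degree $2^{\OO(k)}$, generated by that single root. Since performing $\ell$ algebraic operations on at most $\ell$ such numbers is decidable, there is a computable function $g(\cdot)$ with the property that any single operation performed by the algorithm --- including all sign determinations and comparisons --- can be carried out exactly in $g(k)$ time. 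Concretely, $g$ bundles: setting up the degree-$2^{\OO(k)}$ extension, representing each number by a minimal polynomial together with an isolating interval, and the standard exact arithmetic and sign-testing subroutines; each of these is effective and its running time depends only on the algebraic degree $2^{\OO(k)}$, hence only on $k$, once one also charges to $g$ the bit-complexity of the input under the standing assumption that the input numbers are rational or algebraic of bounded degree.

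Finally I would set $f(k) = k^{\OO(k)}\cdot g(k)$, which is a product of computable functions of $k$ and hence computable, and conclude: multiplying the operation count of Theorem~\ref{parametric_theorem} by the per-operation cost $g(k)$ gives a total running time of $k^{\OO(k)} n\log n \cdot g(k) = f(k)\, n\log n$, as claimed.

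The one point I would be careful to make precise is that the per-operation cost is genuinely a function of $k$ alone and does not secretly depend on $n$: this is exactly why the development leading to Theorem~\ref{parametric_theorem} insisted that computation trees have depth $\OO(k)$ (independent of $n$) and that a single computation tree always reuses the same polynomial root. Given those two facts, the remaining argument is pure bookkeeping, and no further obstacle arises.
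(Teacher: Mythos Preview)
Your proposal is correct and follows exactly the route the paper intends: the paper simply states that the corollary follows from Theorem~\ref{parametric_theorem} ``without diving into the time needed for the algebraic operations,'' and you have spelled out precisely that missing bookkeeping, namely that each of the $k^{\OO(k)} n\log n$ operations manipulates a number of algebraic degree $2^{\OO(k)}$ (hence in time $g(k)$ depending only on $k$), so the total is $f(k)\,n\log n$ with $f$ computable.
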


\section{Conclusions}
\label{sec:conclusions}

We have shown that the \connectivity problem for $k$ segments and $n-k$ points in the plane
can be solved in $f(k)n\log n$ time, for some computable function $f(\cdot)$.
The precise function $f$ depends on the time to manipulate algebraic numbers.
The decision problem is simpler, while the algorithm for the optimization problem
uses parametric search.

The algorithms can be extended to $\RR^d$, for any fixed dimension $d$, when the uncertain regions
are segments. The main observations are the following:
\begin{itemize}
	\item A MST for a set of points in $\RR^d$ has maximum degree $c_d=2^{\OO(d)}$. Indeed, Claim~\ref{degree} shows
		that any two edges incident to a point need to have angle at least $\pi/3$. This implies that
		the maximum degree of the MST is bounded by the kissing number in dimension $d$, which
		is known to be $c_d=2^{\OO(d)}$ using a simple volume argument.
	\item A MST for a set $\PP$ of $n$ points in $\RR^d$ can be computed in $\OO(dn^2)$ time
		by constructing the complete graph $K_\PP$ explicitly and using a generic algorithm for MST
		in dense graphs. The term $\OO(d)$ is added because it is needed to compute	each distance.
		More efficient algorithms with a time complexity of 
		$\OO(n^{2-\tfrac{2}{\lceil d/2\rceil +1}+\varepsilon})$,
		for any $\varepsilon>0$, are known~\cite{AgarwalES91} for any fixed dimension $d$. (The constant hidden
		in the $O$-notation depends on $\varepsilon$.)
	\item In Lemma~\ref{ell_bound}, we have to consider the components obtained by removing up to $kc_d$ edges of the MST of $\PP$.
	\item The rest of the discussion follows as written. When constructing the Voronoi diagram restricted
		to a segment $s_i$ and all the other geometric constructions, 
		the dimension of the ambient space does not matter. 
		In fact, when considering a segment $s_i$, we could just replace the input points 
		by points that are coplanar with the segment and have the same distances to the line supporting the segment.
\end{itemize}
All together, when $d$ is constant,
we get an algorithm for $k$ uncertain line segments and $n-k$ points in $\RR^d$ that uses
$\OO(n^{2-\tfrac{2}{\lceil d/2\rceil +1}+\varepsilon}) + f(k) n\log n$ time, for some computable function $f(\cdot)$.
Interestingly, when $k$ is constant, the bottleneck of the computation in our algorithm is obtaining the MST; after
that step, we need $O(f(k)n\log n)$ time.
When $d$ is unbounded, we get an exponential dependency on $d$ because the number of components in the MST
that have to be considered is $O(kc_d)$.

\section*{Acknowledgements}
We are grateful to the reviewers for their feedback and suggestions to improve the paper.

Funded in part by the Slovenian Research and Innovation Agency (P1-0297, J1-1693, J1-2452, N1-0218, N1-0285).
Funded in part by the European Union (ERC, KARST, project number 101071836). Views and opinions expressed are however those of the authors only and do not necessarily reflect those of the European Union or the European Research Council. Neither the European Union nor the granting authority can be held responsible for them.

\bibliographystyle{plainurl}
\bibliography{bibfile}
\end{document}